\newtheorem{theorem}{Theorem}
\newtheorem{corollary}{Corollary}
\begin{document}

\label{firstpage}

\title[Table Space Designs For Implicit and Explicit Concurrent Tabled Evaluation]
      {Table Space Designs For Implicit and Explicit Concurrent Tabled Evaluation}

\author[Miguel Areias and Ricardo Rocha]
       {MIGUEL AREIAS and RICARDO ROCHA\\
       CRACS \& INESC-TEC and Faculty of Sciences, University of Porto\\
       Rua do Campo Alegre, 1021/1055, 4169-007 Porto, Portugal\\
       \email{\{miguel-areias,ricroc\}@dcc.fc.up.pt}}

\maketitle


\begin{abstract}
  One of the main advantages of Prolog is its potential for the
  \emph{implicit exploitation of parallelism} and, as a high-level
  language, Prolog is also often used as a means to \emph{explicitly
    control concurrent tasks}. Tabling is a powerful implementation
  technique that overcomes some limitations of traditional Prolog
  systems in dealing with recursion and redundant
  sub-computations. Given these advantages, the question that arises
  is if tabling has also the potential for the exploitation of
  concurrency/parallelism. On one hand, tabling still exploits a
  search space as traditional Prolog but, on the other hand, the
  concurrent model of tabling is necessarily far more complex since it
  also introduces concurrency on the access to the tables. In this
  paper, we summarize Yap's main contributions to concurrent tabled
  evaluation and we describe the design and implementation challenges
  of several alternative table space designs for implicit and explicit
  concurrent tabled evaluation which represent different trade-offs
  between concurrency and memory usage. We also motivate for the
  advantages of using \emph{fixed-size} and \emph{lock-free} data
  structures, elaborate on the key role that the engine's \emph{memory
    allocator} plays on such environments, and discuss how Yap’s
  mode-directed tabling support can be extended to concurrent
  evaluation. Finally, we present our future perspectives towards an
  efficient and novel concurrent framework which integrates both
  implicit and explicit concurrent tabled evaluation in a single
  Prolog engine. Under consideration in Theory and Practice of Logic
  Programming (TPLP).
\end{abstract}

\begin{keywords}
Tabling, Table Space, Concurrency, Implementation.
\end{keywords}


\section{Introduction}

Tabling~\cite{Chen-96} is a recognized and powerful implementation
technique that overcomes some limitations of traditional Prolog
systems in dealing with recursion and redundant
sub-computations. Tabling is a refinement of SLD resolution that stems
from one simple idea: save intermediate answers from past computations
so that they can be reused when a \emph{similar call} appears during
the resolution process. Tabling based models are able to reduce the
search space, avoid looping, and always terminate for programs with
the bounded term-size property~\cite{Chen-96}.

Tabling has become a popular and successful technique thanks to the
ground-breaking work in the XSB Prolog system and in particular in the
SLG-WAM engine~\cite{Sagonas-98}, the most successful engine of
XSB. The success of SLG-WAM led to several alternative implementations
that differ in the execution rule, in the data-structures used to
implement tabling, and in the changes to the underlying Prolog
engine. Currently, the tabling technique is widely available in
systems like XSB Prolog~\cite{Swift-12}, Yap Prolog~\cite{CostaVS-12},
B-Prolog~\cite{Zhou-12}, ALS Prolog~\cite{Guo-01},
Mercury~\cite{Somogyi-06}, Ciao Prolog~\cite{Chico-08} and more
recently in SWI Prolog~\cite{Desouter-15} and Picat~\cite{Zhou-15}.

One of the main advantages of Prolog is its potential for the
\emph{implicit exploitation of parallelism}. Many sophisticated and
well-engineered parallel Prolog systems exist in the
literature~\cite{Gupta-01}, being the most successful those that
exploit \emph{implicit
  or-parallelism}~\cite{Aurora-88,Ali-90a,Gupta-99}, \emph{implicit
  and-parallelism}~\cite{Hermenegildo-91,Shen-92,Pontelli-97} or a
combination of both~\cite{CostaVS-91}. Or-parallelism arises when more
than one clause unifies with the current call and it corresponds to
the simultaneous execution of the body of those different
clauses. And-parallelism arises when more than one subgoal occurs in
the body of the clause and it corresponds to the simultaneous
execution of the subgoals contained in a clause's body.

On the other hand, as a high-level language, Prolog is also often used
as a means to \emph{explicitly control and schedule concurrent
  tasks}~\cite{Carro-99,Fonseca-09a}. The ISO Prolog multithreading
standardization proposal~\cite{Moura-08b} is currently implemented in
several Prolog systems including XSB, Yap, Ciao and SWI, providing a
highly portable solution given the number of operating systems
supported by these systems. In a nutshell, multithreading in Prolog is
the ability to concurrently perform multiple computations, in which
each computation runs independently but shares the database
(clauses). It is therefore unsurprising that implicit and explicit
concurrent/parallel evaluation has been an important subject in the
design and development of Prolog systems.

Nowadays, the increasing availability of computing systems with
multiple cores sharing the main memory is already a standardized,
high-performance and viable alternative to the traditional (and often
expensive) shared memory architectures. The number of cores per
processor is expected to continue to increase, further expanding the
potential for taking advantage of such support as an increasingly
popular way to implement dynamic, highly asynchronous, concurrent and
parallel programs. 

Besides the two traditional approaches to concurrency/parallelism: (i)
\emph{fully implicit}, i.e., it is left to the runtime system to
automatically detect the potential concurrent tasks in the program,
assign them for parallel execution and control and synchronize their
execution; and (ii) \emph{fully explicit}, i.e., it is left to the
user to annotate the tasks for concurrent execution, assign them to
the available workers and control the execution and the
synchronization points, the recent years have seen a lot of proposals
trying to combine both approaches in such a way that the user relies
on high-level explicit parallel constructs to trigger parallel
execution and then it is left to the runtime system the control of the
low-level execution details. In the combined approach, in general, a
program begins as a single worker that executes sequentially until
reaching a \emph{parallel construct}. When reaching a parallel
construct, the runtime system launches a set of additional workers to
exploit concurrently the sub-computation at hand. Concurrent execution
is then handled implicitly by the execution model taking into account
additional directive restrictions given to the parallel construct.

Multiple examples of frameworks exist that follow the combined
approach. For example, for imperative programming languages, the
OpenMP~\cite{Chapman-08}, Intel Threading Building
Blocks~\cite{Reinders-07} and Cilk~\cite{Blumofe-1995} frameworks
provide runtime systems for multithreaded parallel programming,
providing users with the means to create, synchronize, and schedule
threads efficiently. For functional programming languages, the
Eden~\cite{Loogen-05} and HDC~\cite{Herrmann-00} Haskell based
frameworks allow the users to express their programs using polymorphic
higher-order functions. For object-oriented programming languages,
MALLBA~\cite{Alba-02} and DPSKEL~\cite{Pelaez-07} frameworks also
showed relevant speedups in the parallel evaluation of combinatorial
optimization benchmarks.

In the specific case of Prolog, given the advantages of tabled
evaluation, the question that arises is if a tabling mechanism has the
potential for the exploitation of concurrency/parallelism. On one
hand, tabling still exploits a search space as traditional Prolog, but
on the other hand, the concurrent model of tabling is necessarily far
more complex than the traditional concurrent models, since it also
introduces concurrency on the access to the tables. In a concurrent
tabling system, tables may be either \emph{private} or \emph{shared}
between workers. On one hand, private tables can be easier to
implement but restrict the degree of concurrency. On the other hand,
shared tables have all the associated issues of locking,
synchronization and potential deadlocks. Here, the problem is even
more complex because we need to ensure the correctness and
completeness of the answers found and stored in the shared
tables. Thus, despite the availability of both threads and tabling in
Prolog compilers such as XSB, Yap, Ciao and SWI, the implementation of
these two features such that they work together seamlessly implies
complex ties to one another and to the underlying engine.

To the best of our knowledge, only the XSB and Yap systems support the
combination of tabling with some form of concurrency/parallelism. In
XSB, the SLG-WAM execution model was extended with a \emph{shared
  tables design}~\cite{Marques-08} to support explicit concurrent
tabled evaluation using threads. It uses a semi-naive approach that,
when a set of subgoals computed by different threads is mutually
dependent, then a \emph{usurpation operation} synchronizes threads and
a single thread assumes the computation of all subgoals, turning the
remaining threads into consumer threads. The design ensures the
correct execution of concurrent sub-computations but the experimental
results showed some limitations~\cite{Marques-10}. Yap implements both
implicit and explicit concurrent tabled evaluation, but
separately. The OPTYap design~\cite{Rocha-05a} combines the
tabling-based SLG-WAM execution model with implicit or-parallelism
using shared memory processes. More recently, a second design supports
explicit concurrent tabled evaluation using threads~\cite{Areias-12a},
but using an alternative view to XSB's design. In Yap's design, each
thread has its own tables, i.e., from a thread point of view the
tables are private, but at the engine level it uses a \emph{common
  table space}, i.e., from the implementation point of view the tables
are shared among threads.

In this paper, we summarize Yap's main developments and contributions
to concurrent tabled evaluation and we describe the design and
implementation challenges of several alternative table space designs
for implicit and explicit concurrent tabled evaluation which represent
different trade-offs between concurrency and memory usage. We also
motivate for the advantages of using \emph{fixed-size} and
\emph{lock-free} data structures for concurrent tabling and we
elaborate on the key role that the engine's \emph{memory allocator}
plays on such an environment where a higher number of simultaneous
memory requests for data structures in the table space can be made by
multiple workers. We also discuss how Yap’s mode-directed tabling
support~\cite{Santos-13} can be extended to concurrent
evaluation. Mode-directed tabling is an extension to the tabling
technique that allows the aggregation of answers by specifying
pre-defined modes such as \emph{min} or \emph{max}. Mode-directed
tabling can be viewed as a natural tool to implement dynamic
programming problems, where a general recursive strategy divides a
problem in simple sub-problems whose goal is, usually, to dynamically
calculate optimal or selective answers as new results arrive.

Finally, we present our future perspectives towards an efficient and
novel concurrent framework which integrates both implicit and explicit
concurrent tabled evaluations in a single tabling engine. This is a
very complex task since we need to combine the explicit control
required to launch, assign and schedule tasks to workers, with the
built-in mechanisms for handling tabling and/or implicit concurrency,
which cannot be controlled by the user. Such a framework could renew
the glamour of Prolog systems, especially in the concurrent/parallel
programming community. Combining the inherent implicit parallelism of
Prolog with explicit high-level parallel constructs will clearly
enhance the expressiveness and the declarative style of tabling, and
simplify concurrent programming.


In summary, the main contributions of this paper are: (i) a systematic
presentation of the different alternative table space designs
implemented in Yap for implicit and explicit concurrent tabled
evaluation (which were dispersed by several publications); (ii) a
formalization of the total memory usage of each table space design,
which allows for a more rigorous comparison and demonstrates how each
design is dependent on the number of workers and on the number of
tabled calls in evaluation; (iii) a performance analysis of Yap's
tabling engine highlighting how independent concurrent flows of
execution interfere at the low-level engine and how dynamic
programming problems fit well with concurrent tabled evaluation; and
(iv) the authors' perspectives towards a future concurrent framework
which integrates both implicit and explicit concurrent tabled
evaluations in a single tabling engine.


The remainder of the paper is organized as follows. First, we
introduce some basic concepts and relevant background. Then, we
present the alternative table space designs for implicit and explicit
concurrent tabled evaluation. Next, we discuss the most important
engine components and implementation challenges to support concurrent
tabled evaluation and we show a performance analysis of Yap's tabling
engine when using different table space designs. At last, we discuss
future perspectives and challenging research directions.


\section{Background}

This section introduces relevant background needed for the following
sections. It briefly describes Yap's original table space organization
and presents Yap's approach for supporting mode-directed tabling.


\subsection{Table Space Organization}

The basic idea behind tabling is straightforward: programs are
evaluated by saving intermediate answers for tabled subgoals so that
they can be reused when a \emph{similar call} appears during the
resolution process. First calls to tabled subgoals are considered
\emph{generators} and are evaluated as usual, using SLD resolution,
but their answers are stored in a global data space, called the
\emph{table space}. Similar calls are called \emph{consumers} and are
resolved by consuming the answers already stored for the corresponding
generator, instead of re-evaluating them against the program
clauses. During this process, as further new answers are found, they
are stored in their table entries and later returned to all similar
calls.

Call similarity thus determines if a subgoal will produce their own
answers or if it will consume answers from a generator call. There are
two main approaches to determine if a subgoal $A$ is similar to a
subgoal $B$:

\begin{itemize}
\item \emph{Variant-based tabling}~\cite{RamakrishnanIV-99}: $A$ and $B$
  are variants if they can be identical through variable renaming. For
  example, $p(X,1,Y)$ and $p(W,1,Z)$ are \emph{variants} because both
  can be renamed into $p(VAR_0,1,VAR_1)$.
\item \emph{Subsumption-based tabling}~\cite{Rao-96}: subgoal $A$ is
  considered similar to $B$ if $A$ is \emph{subsumed} by $B$ (or $B$
  \emph{subsumes} $A$), i.e., if $A$ is more specific than $B$ (or an
  instance of).  For example, subgoal $p(X,1,2)$ is subsumed by
  subgoal $p(Y,1,Z)$ because there is a substitution $\{Y=X,Z=2\}$
  that makes $p(X,1,2)$ an instance of $p(Y,1,Z)$.
\end{itemize}
  
Variant-based tabling has been researched first and is arguably better
understood. For some types of programs, subsumption-based tabling
yields superior time performance~\cite{Rao-96,Johnson-99}, as it
allows greater reuse of answers, and better space usage, since the
answer sets for the subsumed subgoals are not stored. However, the
mechanisms to efficiently support subsumption-based tabling are harder
to implement, which makes subsumption-based tabling not as popular as
variant-based tabling. The Yap Prolog system implements both
approaches for sequential tabling~\cite{CostaVS-12,Cruz-10}, but for
concurrent tabled evaluation, Yap follows the variant-based tabling
approach.

\begin{wrapfigure}{R}{7.5cm}
\includegraphics[width=7.5cm]{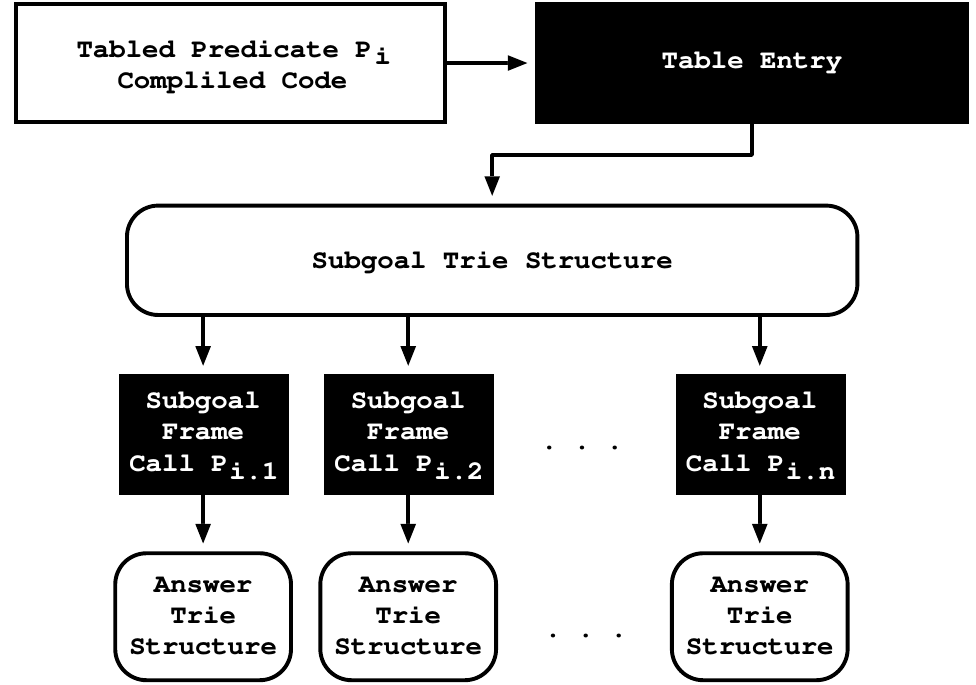}
\caption{Yap's original table space organization}
\label{fig_table_space_original}
\vspace{-\intextsep}
\end{wrapfigure}

A critical component in the implementation of an efficient tabling
system is thus the design of the data structures and algorithms to
access and manipulate the table space. Yap uses \emph{trie data
structures} to implement efficiently the table
space~\cite{RamakrishnanIV-99}. Tries are trees in which common
prefixes are represented only once. The trie data structure provides
complete discrimination for terms and permits lookup and possible
insertion to be performed in a single pass through a term, hence
resulting in a very efficient and compact data structure for term
representation.

Figure~\ref{fig_table_space_original} shows the original table space
organization for a tabled predicate $P_i$ in Yap. At the entry point,
we have the \emph{table entry} data structure. This structure stores
common information for the tabled predicate, such as the predicate's
arity or the predicate's evaluation strategy, and it is allocated when
the predicate is being compiled, so that a pointer to the table entry
can be included in its compiled code. This guarantees that further
calls to the predicate will access the table space starting from the
same point. Below the table entry, we have the \emph{subgoal trie
  structure}. Each different tabled subgoal call $P_{i.j}$ to the
predicate corresponds to a unique path through the subgoal trie
structure, always starting from the table entry, passing by several
subgoal trie data units, the \emph{subgoal trie nodes}, and reaching a
leaf data structure, the \emph{subgoal frame}. The subgoal frame
stores additional information about the subgoal and acts like an entry
point to the \emph{answer trie structure}. Each unique path through
the answer trie data units, the \emph{answer trie nodes}, corresponds
to a different tabled answer to the entry subgoal.


\subsection{Mode-Directed Tabling and Dynamic Programming}

The tabling technique can be viewed as a natural tool to implement
dynamic programming problems. Dynamic programming is a general
recursive strategy that consists in dividing a problem in simple
sub-problems that, often, are the same. Tabling is thus suitable to
use with this kind of problems since, by storing and reusing
intermediate results while the program is executing, it avoids
performing the same computation several times.

In a traditional tabling system, all arguments of a tabled subgoal
call are considered when storing answers into the table space. When a
new answer is not a variant of any answer that is already in the table
space, then it is always considered for insertion. Therefore,
traditional tabling is very good for problems that require storing all
answers. However, with dynamic programming, usually, the goal is to
dynamically calculate optimal or selective answers as new results
arrive. Solving dynamic programming problems can thus be a difficult
task without further support.

\emph{Mode-directed tabling} is an extension to the tabling technique
that supports the definition of \emph{modes} for specifying how
answers are inserted into the table space. Within mode-directed
tabling, tabled predicates are declared using statements of the form
`$table~p(m_1,...,m_n)$', where the $m_i$’s are \emph{mode operators}
for the arguments. The idea is to define the arguments to be
considered for variant checking (the index arguments) and how variant
answers should be tabled regarding the remaining arguments (the output
arguments). In Yap, index arguments are represented with mode
\emph{index}, while arguments with modes \emph{first}, \emph{last},
\emph{min}, \emph{max}, \emph{sum} and \emph{all} represent output
arguments~\cite{Santos-13}. After an answer is generated, the system
tables the answer only if it is \emph{preferable}, accordingly to the
meaning of the output arguments, than some existing variant answer.

In Yap, mode-directed tabled predicates are compiled by extending the
table entry data structure to include a \emph{mode array}, where the
information about the modes is stored, and by extending the subgoal
frames to include a \emph{substitution array}, where the mode
information is stored together with the number of free variables
associated with each argument in the subgoal
call~\cite{Santos-13}. When a new answer is found, it must be compared
against the answer(s) already stored in the table, accordingly to the
modes defined for the corresponding arguments. If the new answer is
preferable, the old answer(s) must be \emph{invalidated} and the new
one inserted in the table. The invalidation process consists in: (a)
deleting all intermediate answer trie nodes corresponding to the
answers being invalidated; and (b) tagging the leaf nodes of such
answers as invalid nodes. Invalid nodes are only deleted when the
table is later completed or abolished.

Regarding the table space designs that we present next, the support
for mode-directed tabling is straightforward when the table data
structures are not accessed concurrently for write operations. The
problem arises for the designs which do not require the completion of
tables to share answers, since we need to efficiently support
concurrent delete operations on the trie structures and correctly
handle the interface between consumer calls and the navigation in the
answer tries.


\section{Concurrent Table Space Designs}
\label{sec_concurrent_table_designs}
  
This section presents alternative table space designs for implicit and
explicit concurrent tabled evaluation, which represent different
trade-offs between concurrency and memory usage.


\subsection{Implicit versus Explicit Concurrent Tabled Evaluation}
\label{sec_implicit_explicit}
  
Remember the two traditional approaches to concurrency/parallelism:
\emph{fully implicit} and \emph{fully explicit}. With fully implicit,
it is left to the runtime system to automatically detect the potential
concurrent tasks in the program, assign them for concurrent/parallel
execution and control and synchronize their execution. In such
approach, the running workers (processes, threads or both) often share
the data structures representing the data of the problem since tasks
do not need to be pre-assigned to workers as any worker can be
scheduled to perform an unexplored concurrent task of the problem. For
tabling, that means that the table space data structures must be fully
shared among all workers. This is the case of the OPTYap
design~\cite{Rocha-05a}, which combines the tabling-based SLG-WAM
execution model with implicit or-parallelism using shared memory
processes.

On the other hand, with a fully explicit approach, it is left to the
user to annotate the tasks for concurrent execution, assign them to
the available workers and control the execution and the synchronization
points. In such approach, the running workers often execute
independently a well-defined (set of) task(s). For tabling, that means
that each evaluation only depends on the computations being performed
by the worker itself, i.e., a worker does not need to consume answers
from other workers' tables as it can always be the generator for all
of its subgoal calls. These are the cases of XSB~\cite{Marques-08} and
Yap~\cite{Areias-12a} designs which support explicit concurrent tabled
evaluation using threads. In any case, the table space data structures
can be either private or partially shared between workers. Yap
proposes several alternative designs to implement the table space for
explicit concurrent tabled
evaluation. Table~\ref{tab_table_space_summary} overviews the several
Yap's table space designs and how they differ in the way the internal
table data structures are implemented and accessed. In the following
subsections, we present the several designs and we show a detailed
analysis of the memory usage of each.

\begin{table}[!ht]
\centering
\caption{Yap's table space designs -- Cooperative Sharing (CS),
  No-Sharing (NS), Subgoal-Sharing (SS), Full-Sharing (FS), Partial
  Answer Sharing (PAS) and Private Answer Chaining (PAC) -- and the
  implementation and access of the data structures in each design: as
  private data structures (--); as fully shared data structures (F);
  as partially shared data structures (P); and as data structures with
  concurrent read (r) and concurrent write (w) operations.}
\begin{tabular}{c|cccccc}
\hline
\textbf{\emph{Data}} & \textbf{\emph{Implicit}} & \multicolumn{4}{c}{\textbf{\emph{Explicit}}} \\ \cline{3-7}
\textbf{\emph{Structure}} & \textbf{\emph{CS}} & \textbf{\emph{NS}} & \textbf{\emph{SS}} & \textbf{\emph{FS}} & \textbf{\emph{PAS}} & \textbf{\emph{PAC}} \\
\hline\hline
\textbf{\emph{Table Entry}}   & $F(r)$  & $F(r)$ & $F(r)$  & $F(r)$  & $F(r)$  & $F(r)$  \\
\textbf{\emph{Subgoal Trie}}  & $F(rw)$ & --     & $F(rw)$ & $F(rw)$ & $F(rw)$ & $F(rw)$ \\
\textbf{\emph{Subgoal Frame}} & $F(rw)$ & --     & --      & $P(rw)$ & $P(r)$  & $P(rw)$ \\
\textbf{\emph{Answer Trie}}   & $F(rw)$ & --     & --      & $F(rw)$ & $P(r)$  & $P(rw)$ \\
\hline
\end{tabular}
\label{tab_table_space_summary}
\end{table}


\subsection{Cooperative Sharing Design}

The \emph{Cooperative Sharing (CS)} design supports the combination of
tabling with implicit or-parallelism using shared memory
processes~\cite{Rocha-05a}. The CS design was the first concurrent
table space design implemented in Yap Prolog. It follows Yap's
original table space organization, as shown in
Fig.~\ref{fig_table_space_original}, and extends it with some sort of
synchronization mechanisms to deal with concurrent accesses. In what
follows, we will not consider synchronization mechanisms which require
extending the table space data structures with extra fields, like lock
fields, since several synchronization techniques exist that do not
require an actual lock field. Two examples are: (i) the usage of an
external global array of locks; or (ii) the usage of low level
\emph{Compare-And-Swap (CAS)} operations. We discuss this in more
detail in section~\ref{sec_engine_components}.

Remember from Fig.~\ref{fig_table_space_original} that, at the entry
point, we have a table entry ($TE$) data structure for each tabled
predicate $P_i$. Underneath each $TE$, we have a subgoal trie
($ST(P_i)$) and several subgoal frame ($SF$) data structures for each
tabled subgoal call $P_{i.j}$ made to the predicate. Finally,
underneath each $SF$, we have an answer trie ($AT(P_{i.j})$) structure
with the answers for the corresponding subgoal call $P_{i.j}$. Please
note that the size of the $TE$ and $SF$ data structures is fixed and
independent from the predicate, but the size of the $ST(P_i)$ and
$AT(P_{i.j})$ data structures varies accordingly to the number of
subgoal calls made and answers found during tabled evaluation.

We can now formalize the \emph{Total Memory Usage (TMU)} of the CS
design. For this, we assume that all tabled predicates are completely
evaluated, meaning that the engine will not allocate any further data
structures on the table space. Given $NP$ tabled predicates,
Eq.~\ref{equation_tmu_cs} presents the $TMU$ of the CS design
($TMU_{CS}$).

\begin{equation}
\begin{aligned}
& TMU_{CS} = \sum\limits_{i = 1}^{NP} MU_{CS}(P_i) \\
& where~~
MU_{CS}(P_i) = TE + ST(P_i) + \sum\limits^{NC(P_i)}_{j=1} [SF +
    AT(P_{i.j})]
\end{aligned}
\label{equation_tmu_cs}
\end{equation} 

The $TMU_{CS}$ is given by the summation of the \emph{Memory Usage
  (MU)} of each predicate $P_i$, i.e, the $MU_{CS}(P_i)$ values, which
correspond then to the sum of each structure inside the table space
for the corresponding predicate $P_i$. The $TE$, $ST(P_i)$, $SF$ and
$AT(P_{i.j})$ values represent the amount of the memory used by
predicate $P_i$ in its table entry, subgoal trie, subgoal frames and
answer trie structures, respectively, and the $NC(P_i)$ value
represents the number of diferent tabled subgoal calls made to the
predicate. For example, in Fig.~\ref{fig_table_space_original}, the
value of $NC(P_i)$ is $n$.

As a final remark, please note that the total memory usage of the CS
design ($TMU_{CS}$) is the same as the total memory usage of Yap's
original table space organization ($TMU_{ORIG}$). Thus, in what
follows, we will use the $TE$, $ST(P_i)$, $SF$ and $AT(P_{i.j})$
values as the reference values for comparison against the other
concurrent table space designs.


\subsection{No-Sharing Design}

Yap implements explicit concurrent tabled evaluation using threads in
which each thread's computation only depends on the evaluations being
performed by the thread itself. The \emph{No-Sharing (NS)} design was
the starting design for supporting explicit concurrent tabled
evaluation in Yap~\cite{Areias-12a}. In the NS design, each thread
allocates fully private tables for each new tabled subgoal being
called. In this design, only the $TE$ structure is shared among
threads. Figure~\ref{fig_table_space_no_sharing} shows the
configuration of the table space for the NS design. For the sake of
simplicity, the figure only shows the configuration for a particular
predicate $P_i$ and a particular subgoal call $P_{i.j}$.

\begin{wrapfigure}{R}{7.5cm}
\includegraphics[width=7.5cm]{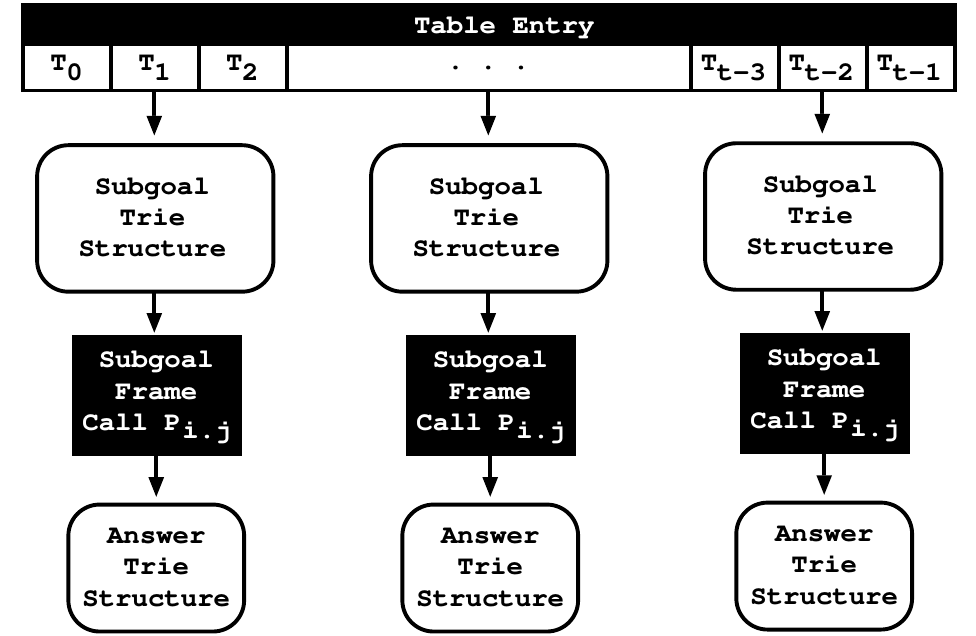}
\caption{Table space organization for the NS design}
\label{fig_table_space_no_sharing}
\vspace{-\intextsep}
\end{wrapfigure}


The table entry still stores the common information for the predicate
but it is extended with a bucket array ($BA$), where each thread $T_k$
has its own entry, which then points to the private $ST(P_i)$, $SF$
and $AT(P_{i.j})$ data structures of the thread. Each bucket array
contains as much entry cells as the maximum number of threads that can
be created in Yap (currently, Yap supports 1024 simultaneous
threads). However, in practice, this solution can be highly
inefficient and memory consuming, as this huge bucket array must be
always allocated even when only one thread will use it. To solve this
problem, we introduce a kind of \emph{inode pointer structure}, where
the bucket array is split into direct bucket cells and indirect bucket
cells~\cite{Areias-12a}. The direct bucket cells are used as before,
but the indirect bucket cells are allocated only as needed, which
alleviates the memory problem and easily adjusts to a higher maximum
number of threads. This direct/indirect organization is applied to all
bucket arrays.


Since the $ST(P_i)$, $SF$ and $AT(P_{i.j})$ data structures are
private to each thread, they can be removed when the thread finishes
execution. Only the table entry is shared among threads. As this
structure is created by the main thread when a program is being
compiled, no concurrent writing operations will exist between threads
and thus no synchronization points are required for the NS design.

Given an arbitrary number of $NT$ running threads and assuming that
all threads have completely evaluated the same number $NC(P_i)$ of
tabled subgoal calls, Eq.~\ref{equation_mu_ns} shows the memory usage
for a predicate $P_i$ in the NS design ($MU_{NS}(P_i)$).

\begin{equation}
\begin{aligned}
& MU_{NS}(P_i) = TE_{NS} + NT * [ST(P_i) + \sum\limits^{NC(P_i)}_{j=1} [SF + AT(P_{i.j})]] \\ 
& where~~
TE_{NS}  = TE + BA
\end{aligned}
\label{equation_mu_ns}
\end{equation} 

The $MU_{NS}(P_i)$ value is given by the sum of the memory size of the
extended table entry structure ($TE_{NS}$) plus the sum of the sizes
of the private structures of each thread multiplied by the $NT$
threads. The memory size of $TE_{NS}$ is given by the size of the
original $TE$ structure added with the memory size of the bucket array
($BA$). The memory size of the remaining structures is the same as in
Yap's original table space organization.

As for Eq.~\ref{equation_tmu_cs}, the total memory usage of the NS
design ($TMU_{NS}$) (not shown in Eq.~\ref{equation_mu_ns}) is given
by the summation of the memory usage of each predicate, i.e, the
$MU_{NS}(P_i)$ values. Comparing $TMU_{NS}$ with $TMU_{ORIG}$ given
$NP$ tabled predicates, the extra memory cost of the NS design to
support concurrency is given by the formula:

\begin{equation*}
\sum^{NP}_{i = 1}[BA + [NT - 1] * [ST(P_i) + \sum\limits^{NC(P_i)}_{j=1}
    [SF + AT(P_{i.j})]]]
\end{equation*} 

The formula shows that for the base case of 1 thread ($NT=1$), the
amount of extra memory spent by the NS design, given by $NP*BA$,
corresponds to the bucket array extensions. When increasing the number
of threads, the amount of extra memory spent in the $ST(P_i)$, $SF$
and $AT(P_{i.j})$ data structures increases proportionally to
$NT$. This dependency on the number of threads motivated us to create
alternative designs that could decrease the amount of extra memory to
be spent. The following subsections present such alternative designs.


\subsection{Subgoal-Sharing Design}

In the \emph{Subgoal-Sharing (SS)} design, the threads share part of
the table space. Figure~\ref{fig_table_space_subgoal_sharing} shows
the configuration of the table space for the SS design. Again, for the
sake of simplicity, the figure only shows the configuration for a
particular tabled predicate $P_i$ and a particular subgoal call
$P_{i.j}$.

\begin{wrapfigure}{R}{7.5cm}
\includegraphics[width=7.5cm]{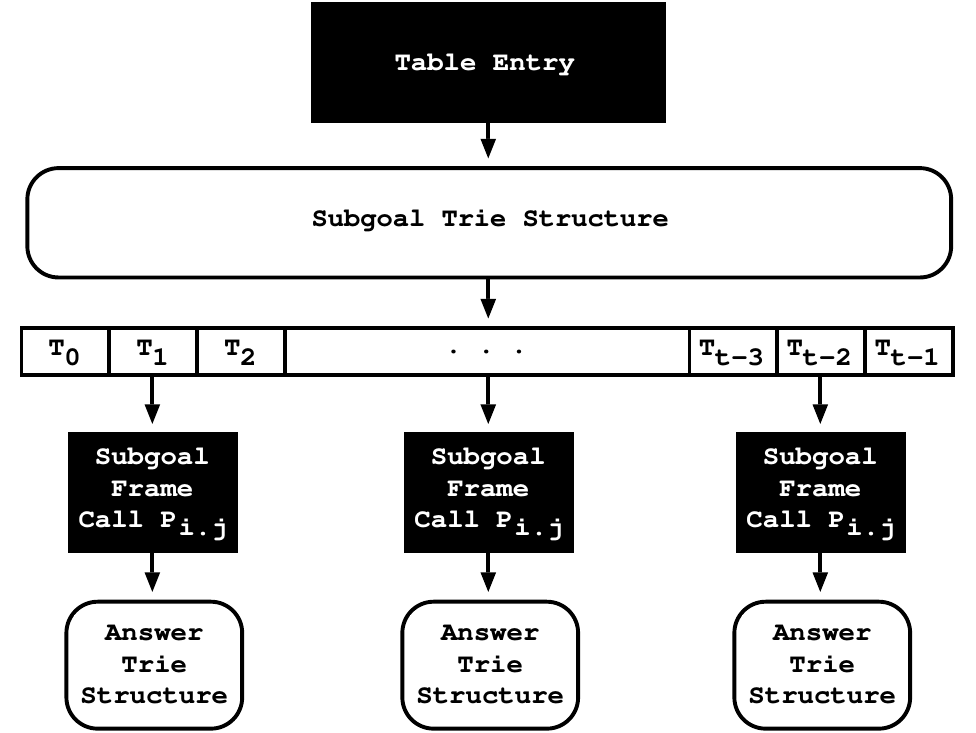}
\caption{Table space organization for the SS design}
\label{fig_table_space_subgoal_sharing}
\end{wrapfigure}

In the SS design, the $ST(P_i)$ data structures are now shared among
the threads and the leaf data structure in each subgoal trie path,
instead of referring a $SF$ as before, it now points to a $BA$. Each
thread $T_K$ has its own entry inside the $BA$ which then points to
private $SF$ and $AT(P_{i.j})$ structures. In this design, concurrency
among threads is restricted to the allocation of trie nodes on the
$ST(P_i)$ structures. Whenever a thread finishes execution, its
private structures are removed, but the shared part remains present as
it can be in use or be further used by other threads.

Given an arbitrary number of $NT$ running threads and assuming that
all threads have completely evaluated the same number $NC(P_i)$ of
tabled subgoal calls, Eq.~\ref{equation_mu_ss} shows the memory usage
for a predicate $P_i$ in the SS design ($MU_{SS}(P_i)$).

The memory usage for the SS design is given by the sum of the memory
size of the $TE$ and $ST(P_i)$ data structures plus the summation, for
each subgoal call, of the memory used by the $BA$ added with the sizes
of the private structures of each thread multiplied by the $NT$
threads. The memory size of each particular data structure is the same
as in Yap’s original table space organization.

\begin{equation}
\begin{aligned}
& MU_{SS}(P_i) = TE + ST(P_i) + \sum\limits^{NC(P_i)}_{j=1} [BA + NT * [SF + AT(P_{i.j})]]
\end{aligned}
\label{equation_mu_ss}
\end{equation} 

Theorem~\ref{theorem_NS_vs_SS} shows the conditions where the SS design
uses less memory than the NS design for an arbitrary number of threads
$NT$ and an arbitrary number of subgoal calls $NC(P_i)$\footnote{The
  proofs for all the theorems that follow are presented in detail
  in~\ref{appendix_proofs}.}.

\begin{theorem}
\label{theorem_NS_vs_SS}
If $NT \geq 1$ and $NC(P_i) \geq 1$ then $MU_{SS}(P_i) \leq
MU_{NS}(P_i) $ if and only if the formula $ [NC(P_i) - 1] * BA \leq
[NT - 1] * ST(P_i)$ holds.
\end{theorem}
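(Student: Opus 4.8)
The plan is to prove this by direct algebraic manipulation, subtracting the two memory-usage formulas and simplifying until the stated inequality emerges. Since both $MU_{SS}(P_i)$ and $MU_{NS}(P_i)$ are given explicitly in Eq.~\ref{equation_mu_ss} and Eq.~\ref{equation_mu_ns}, the cleanest route is to form the difference $MU_{NS}(P_i) - MU_{SS}(P_i)$ and show it is nonnegative exactly when the claimed condition holds.

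First I would write out both expressions over a common structure. Notice that the term $NT * \sum_{j=1}^{NC(P_i)} [SF + AT(P_{i.j})]$ appears in both designs: in the NS formula it comes from $NT$ copies of the per-thread private structures (after distributing $NT$ across the inner sum), and in the SS formula it appears directly inside the subgoal-call summation. These cancel when we subtract. Likewise the shared $TE$ and $ST(P_i)$ terms are present in both, though the NS design carries the extended entry $TE_{NS} = TE + BA$ while SS carries only $TE$ plus a single $ST(P_i)$; meanwhile NS multiplies $ST(P_i)$ by $NT$. So after cancellation the difference should reduce to a comparison between the $BA$ terms (one $BA$ per subgoal call in SS, i.e.\ $NC(P_i)$ of them, versus a single $BA$ in the NS table entry) and the $ST(P_i)$ terms (one copy in SS versus $NT$ copies in NS).

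The key algebraic step is therefore to isolate these surviving terms. I expect the difference to collapse to something of the form $[NT - 1] * ST(P_i) - [NC(P_i) - 1] * BA$, at which point the equivalence $MU_{SS}(P_i) \leq MU_{NS}(P_i) \iff [NC(P_i) - 1] * BA \leq [NT - 1] * ST(P_i)$ is immediate by moving one term across the inequality. The hypotheses $NT \geq 1$ and $NC(P_i) \geq 1$ guarantee that the coefficients $[NT-1]$ and $[NC(P_i)-1]$ are nonnegative, so no sign reversals occur and the base cases (a single thread or a single subgoal call) are covered consistently.

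The main obstacle is bookkeeping rather than conceptual: I must be careful to expand the inner summations correctly, in particular to distribute the $NT$ factor in the NS formula across the sum $\sum_{j=1}^{NC(P_i)}[SF + AT(P_{i.j})]$ and to recognize that $\sum_{j=1}^{NC(P_i)} BA = NC(P_i) * BA$ in the SS formula. Since the per-answer-trie sizes $AT(P_{i.j})$ are assumed identical across threads (both designs store the same private answer tries), they cancel term-by-term without needing any assumption about their individual values. Once these cancellations are tracked cleanly, the equivalence follows directly and the proof is complete.
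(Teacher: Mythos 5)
Your proposal is correct and follows essentially the same route as the paper's own proof: expand both formulas, cancel the common $TE$ and $NT\ast\sum_{j=1}^{NC(P_i)}[SF + AT(P_{i.j})]$ terms, and reduce the difference to $[NC(P_i)-1]\ast BA - [NT-1]\ast ST(P_i)$, from which the equivalence is immediate. The bookkeeping you describe (distributing $NT$ over the inner sum in NS and collapsing $\sum_{j=1}^{NC(P_i)} BA$ to $NC(P_i)\ast BA$ in SS) is exactly what the paper does.
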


Theorem~\ref{theorem_NS_vs_SS} shows that the comparison between the
NS and SS designs depends directly on the number of subgoal calls
($NC(P_i)$) made to the predicate by the number of threads ($NT$) in
evaluation. These numbers will affect the memory size of the $BA$ and
$ST(P_i)$ structures. The NS design grows in the number of $ST(P_i)$
structures as we increase the number of threads. The SS design grows
in the number of $BA$ structures proportionally to the number of
subgoal calls made to the predicate. The number of subgoal calls and
the size of the $ST(P_i)$ structures depends on the predicate being
evaluated, while the size of the $BA$ structures is fixed by the
implementation and the number of threads is user-dependent. For one
thread ($NT=1$), the following corollaries can be derived from
Thm.~\ref{theorem_NS_vs_SS}:

\begin{corollary}
If $NT = 1$ and $NC(P_i) = 1$ then $MU_{SS}(P_i) = MU_{NS}(P_i)$.
\end{corollary}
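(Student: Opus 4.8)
The plan is to prove the corollary by direct substitution of $NT = 1$ and $NC(P_i) = 1$ into the memory-usage expressions for the two designs and to verify that both collapse to the same closed form. Because this is the degenerate endpoint of Thm.~\ref{theorem_NS_vs_SS}, I expect no genuine difficulty; the only point requiring care is that the parent theorem asserts the one-sided inequality $MU_{SS}(P_i) \leq MU_{NS}(P_i)$, whereas here we must establish \emph{equality}, so a bare appeal to Thm.~\ref{theorem_NS_vs_SS} does not immediately suffice.

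First I would simplify $MU_{NS}(P_i)$ from Eq.~\ref{equation_mu_ns}. Setting $NC(P_i) = 1$ collapses the inner summation to the single term $j = 1$, and setting $NT = 1$ reduces the multiplicative factor to $1$; together with $TE_{NS} = TE + BA$ this yields $MU_{NS}(P_i) = TE + BA + ST(P_i) + SF + AT(P_{i.1})$. Next I would do the same for $MU_{SS}(P_i)$ from Eq.~\ref{equation_mu_ss}: the outer summation again reduces to its $j = 1$ term and the factor $NT$ becomes $1$, giving $MU_{SS}(P_i) = TE + ST(P_i) + BA + SF + AT(P_{i.1})$. A term-by-term comparison of the two right-hand sides then shows they are identical, establishing $MU_{SS}(P_i) = MU_{NS}(P_i)$.

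An equivalent and slightly more illuminating route reuses the algebraic core of Thm.~\ref{theorem_NS_vs_SS}. That theorem rests on the identity $MU_{NS}(P_i) - MU_{SS}(P_i) = [NT - 1] \, ST(P_i) - [NC(P_i) - 1] \, BA$, obtained by subtracting Eq.~\ref{equation_mu_ss} from Eq.~\ref{equation_mu_ns} and cancelling the common $TE$ term together with the $NT \sum_{j} [SF + AT(P_{i.j})]$ contribution, which carries the same factor $NT$ in both designs. At $NT = 1$ and $NC(P_i) = 1$ both coefficients $[NT - 1]$ and $[NC(P_i) - 1]$ vanish simultaneously, so the difference is exactly $0$ and equality follows. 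I would favour presenting this difference identity explicitly, since it makes transparent why the corollary gives equality rather than merely the inequality $\leq$ supplied by Thm.~\ref{theorem_NS_vs_SS}: the two penalty terms that distinguish the designs --- the extra subgoal tries replicated per thread in NS and the extra bucket arrays allocated per subgoal call in SS --- both disappear precisely when there is a single thread and a single subgoal call.
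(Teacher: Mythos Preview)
Your proof is correct and essentially matches the paper's approach. The paper does not spell out a separate proof for this corollary; it simply states that it is derived from Thm.~\ref{theorem_NS_vs_SS}, whose proof establishes the exact difference identity $MU_{SS}(P_i) - MU_{NS}(P_i) = [NC(P_i)-1]\,BA - [NT-1]\,ST(P_i)$ that you reproduce in your second route, and equality then follows by plugging in $NT=1$ and $NC(P_i)=1$.
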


\begin{corollary}
If $NT = 1$ and $NC(P_i)>1$ then $MU_{SS}(P_i) > MU_{NS}(P_i)$.
\end{corollary}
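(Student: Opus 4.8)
The plan is to prove the biconditional by directly comparing the two memory-usage expressions and reducing their difference to the claimed inequality. First I would expand both $MU_{NS}(P_i)$ and $MU_{SS}(P_i)$ from Eq.~\ref{equation_mu_ns} and Eq.~\ref{equation_mu_ss} into fully distributed sums, substituting $TE_{NS} = TE + BA$ in the former. This yields
\begin{equation*}
MU_{NS}(P_i) = TE + BA + NT * ST(P_i) + NT * \sum_{j=1}^{NC(P_i)} [SF + AT(P_{i.j})]
\end{equation*}
for the NS design and
\begin{equation*}
MU_{SS}(P_i) = TE + ST(P_i) + NC(P_i) * BA + NT * \sum_{j=1}^{NC(P_i)} [SF + AT(P_{i.j})]
\end{equation*}
for the SS design, the latter obtained by distributing the outer summation over the bracketed $[BA + NT * [SF + AT(P_{i.j})]]$ term.

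Next I would form the difference $MU_{NS}(P_i) - MU_{SS}(P_i)$ and observe that the common table-entry cost $TE$ and the per-thread term $NT * \sum_{j=1}^{NC(P_i)} [SF + AT(P_{i.j})]$ are identical in both designs and therefore cancel. The only surviving contributions come from the subgoal trie, which the NS design replicates $NT$ times whereas SS keeps a single shared copy, and from the bucket array, which NS allocates once per predicate whereas SS allocates one per subgoal call. Collecting these gives
\begin{equation*}
MU_{NS}(P_i) - MU_{SS}(P_i) = [NT - 1] * ST(P_i) - [NC(P_i) - 1] * BA.
\end{equation*}

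Finally, since $MU_{SS}(P_i) \leq MU_{NS}(P_i)$ is by definition equivalent to $MU_{NS}(P_i) - MU_{SS}(P_i) \geq 0$, substituting the simplified difference immediately gives the equivalence with $[NC(P_i) - 1] * BA \leq [NT - 1] * ST(P_i)$, proving both directions of the biconditional simultaneously. The argument is an exact cancellation followed by a one-step rearrangement, so I do not anticipate a genuine obstacle; the only point requiring care is the bookkeeping, namely matching the $NT$-fold replication of $ST(P_i)$ in NS against the $NC(P_i)$-fold replication of $BA$ in SS, and confirming that the summed $SF + AT(P_{i.j})$ terms coincide term by term (they do, since both designs keep those structures private and per-thread, so no assumption beyond their equal per-thread sizes is used). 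Because the equivalence is phrased as a difference being nonnegative rather than as a ratio, no case analysis on the signs of $BA$ or $ST(P_i)$ is needed for the theorem itself; the two stated corollaries then follow by setting $NT = 1$, where the surviving term $-[NC(P_i) - 1] * BA$ forces equality when $NC(P_i) = 1$ and strict inequality when $NC(P_i) > 1$.
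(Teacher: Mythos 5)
Your derivation is correct and follows essentially the same route as the paper: expand $MU_{NS}(P_i)$ and $MU_{SS}(P_i)$, cancel the common $TE$ and per-thread $SF+AT(P_{i.j})$ terms, reduce the difference to $[NT-1]*ST(P_i) - [NC(P_i)-1]*BA$, and then read off the corollary at $NT=1$, $NC(P_i)>1$ from the surviving term $-[NC(P_i)-1]*BA$ (using, as the paper implicitly does, that $BA>0$). No substantive difference from the paper's argument.
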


In summary, for one thread, the SS design is equal to or worse than
the NS design in terms of memory usage. For a number of threads higher
than one, the SS design performs better than the NS design when the
formula in Thm.~\ref{theorem_NS_vs_SS} holds. The best scenarios for
the SS design occur for predicates with few subgoal calls and for
subgoal trie structures using larger amounts of memory. In such
scenarios, the difference between both designs increases
proportionally to the number of threads.


\subsection{Full-Sharing Design}

The \emph{Full-Sharing (FS)} design tries to maximize the amount of
data structures being shared among
threads. Figure~\ref{fig_table_space_full_sharing} shows the
configuration of the table space for the FS design. Again, for the
sake of simplicity, the figure only shows the configuration for a
particular tabled predicate $P_i$ and a particular subgoal call
$P_{i.j}$.

\begin{wrapfigure}{R}{8.5cm}
\includegraphics[width=8.5cm]{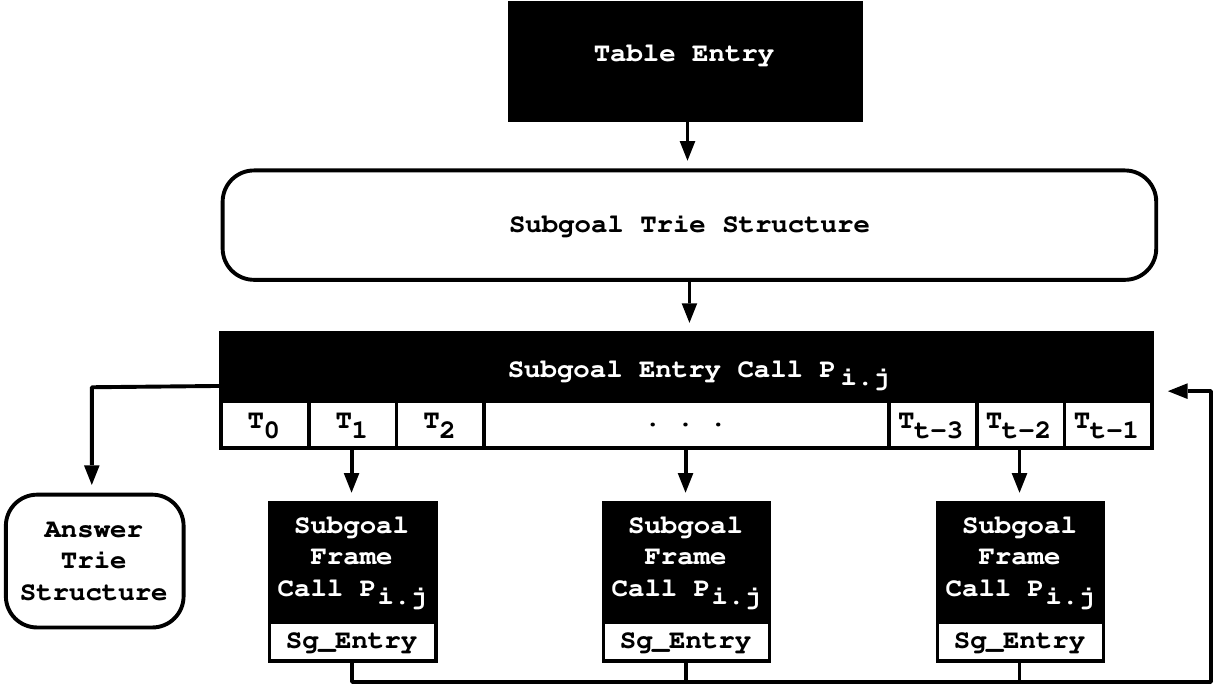}
\caption{Table space organization for the FS design}
\label{fig_table_space_full_sharing}
\vspace{-\intextsep}
\end{wrapfigure}

In this design, the $AT(P_{i.j})$ structure and part of the subgoal
frame information, the subgoal entry data structure in
Fig.~\ref{fig_table_space_full_sharing}, are now also shared among all
threads. The previous $SF$ data structure was split into two: the
subgoal entry stores common information for the subgoal call (such as
the pointer to the shared $AT(P_{i.j})$ structure) and a $BA$
structure; and the remaining information (the subgoal frame data
structure in Fig.~\ref{fig_table_space_full_sharing}) is kept private
to each thread. Concurrency among threads now includes also the access
to the subgoal entry data structure and the allocation of trie nodes
on the $AT(P_{i.j})$ structures.

The subgoal entry includes a $BA$ where each thread $T_k$ has its own
entry which then points to the thread's private subgoal frame. Each
private subgoal frame includes an extra field which is a back pointer
to the common subgoal entry. This is important in order to keep
unaltered all the tabling data structures that access subgoal
frames. To access the private information, there is no extra cost (we
still use a direct pointer), and only for the common information on
the subgoal entry we pay the extra cost of following an indirect
pointer.

Comparing with the NS and SS designs, the FS design has two major
advantages. First, memory usage is reduced to a minimum. The only
memory overhead, when compared with a single threaded evaluation, is
the $BA$ associated with each subgoal entry, and apart from the split
on the subgoal frame data structure, all the remaining structures
remain unchanged. Second, since threads are sharing the same
$AT(P_{i.j})$ structures, answers inserted by a thread for a
particular subgoal call are automatically made available to all other
threads when they call the same subgoal.

Given an arbitrary number of $NT$ running threads and assuming that
all threads have completely evaluated the same number $NC(P_i)$ of
tabled subgoal calls, Eq.~\ref{equation_mu_fs} shows the memory usage
for a predicate $P_i$ in the FS design ($MU_{FS}(P_i)$).

\begin{equation}
\begin{aligned}
& MU_{FS}(P_i) = TE + ST(P_i) + \sum\limits^{NC(P_i)}_{j=1} [SE_{FS} + BA + NT * [SF_{FS} + BP] + AT(P_{i.j})] \\
& where~~
SE_{FS} + SF_{FS} = SF
\end{aligned}
\label{equation_mu_fs}
\end{equation} 

The memory usage for the FS design is given by the sum of the memory
size of the $TE$ and $ST(P_i)$ data structures plus the summation, for
each subgoal call, of the memory used by the subgoal entry data
structure ($SE_{FS}$), the $BA$ and the $AT(P_{i.j})$ structures
added with the sizes of the private data structures of each thread
multiplied by the $NT$ threads. The private data structures of each
thread include the subgoal frame ($SF_{FS}$) and the back pointer
($BP$). The memory size of the original $SF$ is now given by the size
of the $SE_{FS}$ and $SF_{FS}$ data structures. The memory size of the
remaining structures is the same as in Yap's original table space
organization.

Since the FS design is a refinement of the SS design, next we use
Thm.~\ref{theorem_SS_vs_FS} to show that the FS design always requires
less memory than the SS design for more than one thread.

\begin{theorem}
\label{theorem_SS_vs_FS}
If $NT > 1$ and $NC(P_i) \geq 1$ then $MU_{FS}(P_i) < MU_{SS}(P_i)$.
\end{theorem}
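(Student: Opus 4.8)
The plan is to compare the two memory-usage expressions term by term and reduce the claim to a single per-subgoal inequality. First I would subtract $MU_{FS}(P_i)$ from $MU_{SS}(P_i)$ using Eqs.~\ref{equation_mu_ss} and~\ref{equation_mu_fs}. The $TE$ and $ST(P_i)$ terms are identical in both designs and cancel, and inside the summation over the $NC(P_i)$ subgoal calls the $BA$ contribution also cancels, since both designs attach exactly one bucket array per subgoal call. What remains is a sum, over $j = 1, \dots, NC(P_i)$, of the difference between the SS per-call cost $NT \cdot [SF + AT(P_{i.j})]$ and the FS per-call cost $SE_{FS} + NT \cdot [SF_{FS} + BP] + AT(P_{i.j})$.

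Next I would exploit the defining relation $SF = SE_{FS} + SF_{FS}$ from Eq.~\ref{equation_mu_fs}, which expresses that FS merely splits the original subgoal frame into a shared part and a private part. Substituting $SF - SF_{FS} = SE_{FS}$ collapses the subgoal-frame terms and leaves the per-call difference
\[
(NT-1) \cdot [SE_{FS} + AT(P_{i.j})] - NT \cdot BP .
\]
This isolates the trade-off cleanly: the gain from FS is that the answer trie is shared rather than replicated, contributing $(NT-1) \cdot AT(P_{i.j})$ of saving, together with the common subgoal information counted once rather than $NT$ times, contributing $(NT-1) \cdot SE_{FS}$; the price paid is the $NT$ back pointers, one $BP$ per thread.

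It then remains to show this quantity is strictly positive for every $NT > 1$, after which summing the strictly positive per-call differences over the $NC(P_i) \geq 1$ calls yields $MU_{SS}(P_i) > MU_{FS}(P_i)$, the statement. The main obstacle is the boundary case $NT = 2$, where the bound degenerates to the requirement $SE_{FS} + AT(P_{i.j}) > 2 \cdot BP$; for larger $NT$ the coefficient $(NT-1)$ on the savings outpaces the coefficient $NT$ on $BP$ and positivity becomes easier. To close the boundary case I would invoke the minimal structural facts that $BP$ is a single back pointer, whereas $SE_{FS}$ holds at least the pointer to the shared answer trie together with the remaining common subgoal fields, and each $AT(P_{i.j})$ contains at least a root answer-trie node; hence each of $SE_{FS}$ and $AT(P_{i.j})$ is at least the size of a back pointer and at least one of them is strictly larger, so writing $SE_{FS} + AT(P_{i.j}) = 2 \cdot BP + \delta$ we have $\delta > 0$. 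The per-call difference then equals $(NT-2) \cdot BP + (NT-1) \cdot \delta$, which is strictly positive for all $NT \geq 2$ since $BP > 0$ and $NT - 1 \geq 1$. Summing over $j$ establishes the theorem.
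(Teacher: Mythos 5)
Your proposal is correct and follows essentially the same route as the paper's proof: cancel $TE$, $ST(P_i)$ and the per-call $BA$, substitute $SE_{FS} = SF - SF_{FS}$, and reduce the claim to the per-call inequality $(NT-1)\cdot[SE_{FS} + AT(P_{i.j})] > NT \cdot BP$. The only difference is in how that last inequality is closed: the paper simply asserts via sign annotations that $[NT-1]\cdot[SF_{FS}+BP-SF]+BP$ is negative while $[NT-1]\cdot AT(P_{i.j})$ is positive, whereas you handle the $NT=2$ boundary explicitly by bounding $SE_{FS} + AT(P_{i.j}) \geq 2\cdot BP + \delta$ with $\delta > 0$, which is a slightly more careful version of the same structural size assumptions.
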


Remember from the previous subsection that the SS behavior depends on
the amount of memory spent in the $BA$. The FS maintains this
dependency, since this structure is co-allocated inside the subgoal
entry structure. The difference between both designs occurs in the
memory usage spent in the subgoal frames and in the answer tries. For
the subgoal frames, the difference is that the size of the private
subgoal frames used by the FS design, including the back pointer, is
lower that the ones used by the SS design. For the answer trie
structures, the FS design simply does not allocate as many of these
structures has the SS design. For one thread ($NT=1$), the following
corollary can be derived from Thm.~\ref{theorem_SS_vs_FS}:

\begin{corollary}
If $NT=1$ and $NC(P_i) \geq 1$ then $ MU_{FS}(P_i) > MU_{SS}(P_i)$.
\end{corollary}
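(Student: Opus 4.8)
The plan is to reduce both memory-usage expressions to a common closed form by setting $NT = 1$ and then computing their difference directly. First I would substitute $NT = 1$ into Eq.~\ref{equation_mu_ss} and Eq.~\ref{equation_mu_fs}, which collapses the multiplicative factor $NT$ in front of the private structures to $1$. This yields, for the SS design, a per-call cost of $BA + SF + AT(P_{i.j})$ summed over the $NC(P_i)$ subgoal calls, and for the FS design a per-call cost of $SE_{FS} + BA + SF_{FS} + BP + AT(P_{i.j})$.

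The key step is to eliminate the split subgoal-frame terms using the identity $SE_{FS} + SF_{FS} = SF$ stated in Eq.~\ref{equation_mu_fs}. Substituting this into the FS expression rewrites the per-call cost as $SF + BA + BP + AT(P_{i.j})$. Since the leading $TE + ST(P_i)$ terms are identical in both designs, subtracting the SS expression from the FS expression cancels every term except the back pointer, leaving $MU_{FS}(P_i) - MU_{SS}(P_i) = \sum_{j=1}^{NC(P_i)} BP = NC(P_i) \cdot BP$.

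Finally I would invoke that the back pointer is a genuine, fixed-size field with strictly positive memory cost, so $BP > 0$, and that $NC(P_i) \geq 1$ by hypothesis; hence the difference $NC(P_i) \cdot BP$ is strictly positive and the strict inequality $MU_{FS}(P_i) > MU_{SS}(P_i)$ follows. The only point requiring care---rather than a real obstacle---is the observation that the FS design's advantage over SS relies entirely on the factor $NT$ amortising the shared answer tries across threads; when $NT = 1$ that amortisation vanishes and the sole remaining difference is the unavoidable per-call overhead of the back pointer $BP$, which is precisely why the inequality reverses relative to Thm.~\ref{theorem_SS_vs_FS}.
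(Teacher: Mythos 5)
Your proposal is correct and follows essentially the same route as the paper: the paper derives this corollary from the difference formula established in the proof of Theorem~\ref{theorem_SS_vs_FS}, namely $MU_{FS}(P_i) - MU_{SS}(P_i) = \sum_{j=1}^{NC(P_i)}[NT \cdot BP] + \sum_{j=1}^{NC(P_i)}[(NT-1)(SF_{FS}-SF)] - \sum_{j=1}^{NC(P_i)}[(NT-1)\,AT(P_{i.j})]$, which at $NT=1$ collapses to exactly your $NC(P_i)\cdot BP > 0$. Your direct substitution of $NT=1$ into Eqs.~\ref{equation_mu_ss} and~\ref{equation_mu_fs} followed by the cancellation via $SE_{FS}+SF_{FS}=SF$ is the same computation performed in a slightly different order, and your closing remark correctly identifies why the inequality reverses relative to the theorem.
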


In summary, for one thread, the FS design is always worse than the SS
design and the difference increases proportionally to the number of
subgoal calls. For a number of threads higher than one, the FS design
always performs better than the SS design and the difference increases
as the number of threads and the number of subgoal calls also
increases.


\subsection{Partial Answer Sharing Design}

In the SS design, the subgoal trie structures are shared among threads
but the answers for the subgoal calls are stored in private answer
trie structures to each thread. As a consequence, no sharing of
answers between threads is done. The \emph{Partial Answer Sharing
  (PAS)} design~\cite{areias-jss16} extends the SS design to allow
threads to share answers. Threads still view their answer tries as
private but are able to consume answers from completed answer tries
computed by other threads. The idea is as follows. Whenever a thread
calls a new tabled subgoal, first it searches the table space to
lookup if any other thread has already computed the answers for that
subgoal. If so, then the thread reuses the available answers, thus
avoiding recomputing the subgoal call from scratch. Otherwise, it
computes the subgoal itself. Several threads can work on the same
subgoal call simultaneously, i.e., we do not protect a subgoal from
further evaluations while other threads have picked it up already. The
first thread completing a subgoal, shares the results by making them
available (public) to the other
threads. Figure~\ref{fig_table_space_answer_sharing} illustrates the
table space organization for the PAS design.

\begin{wrapfigure}{R}{7.5cm}
\includegraphics[width=7.5cm]{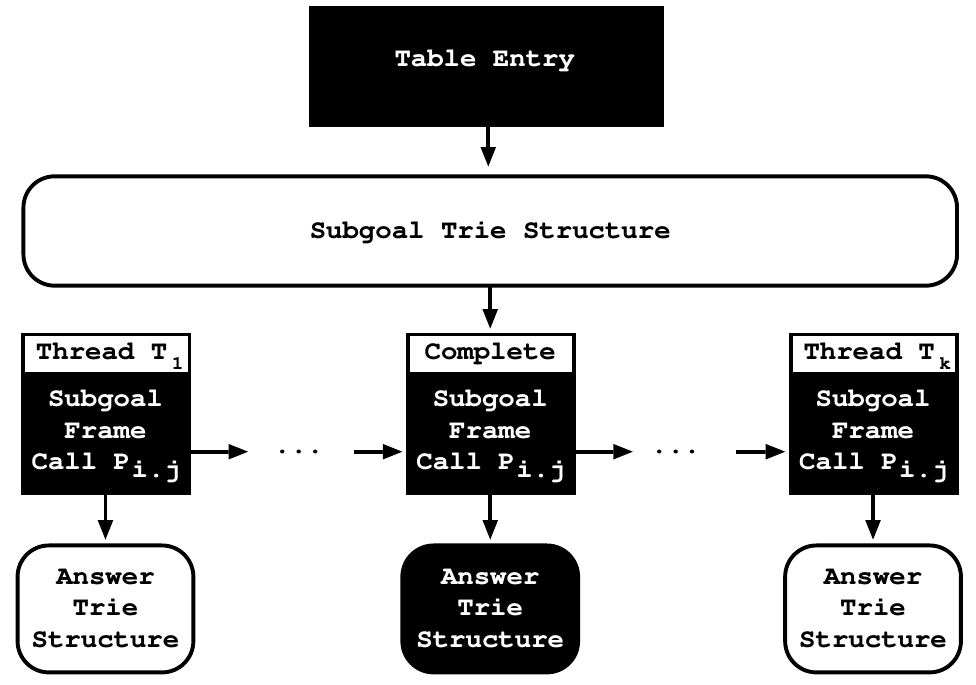}
\caption{Table space organization for the PAS design}
\label{fig_table_space_answer_sharing}
\vspace{-\intextsep}
\end{wrapfigure}

As for the SS design, threads can concurrently access the subgoal trie
structures for both read and write operations, but for the answer trie
structures, they are only concurrently accessed for reading after
completion. All subgoal frames and answer tries are initially private
to a thread. Later, when the first subgoal frame is completed, i.e.,
when we have found the full set of answers for it, it is marked as
completed (black answer trie in
Fig.~\ref{fig_table_space_answer_sharing}) and put in the beginning of
the list of private subgoal frames (configuration shown in
Fig.~\ref{fig_table_space_answer_sharing}). With the PAS design, we
also aim to improve the memory usage of the SS design by removing the
$BA$ data structure. This is a direct consequence of the analysis made
in Eq.~\ref{equation_mu_ss} where we have shown that the performance
of the SS design is directly affected by the size of the memory used
by the $BA$ structures. Thus, instead of pointing to a $BA$ as in the
SS design, now the leaf data structure in each subgoal trie path
points to a list of private subgoal frames corresponding to the
threads evaluating the subgoal call. In order to find the subgoal
frame corresponding to a thread, we may have to pay an extra cost for
navigating in the list but, once a subgoal frame is completed, we can
access it immediately since it is always stored in the beginning of
the list.

Given an arbitrary number of $NT$ running threads and assuming that
all threads have completely evaluated the same number $NC(P_i)$ of
tabled subgoal calls, Eq.~\ref{equation_mu_as} shows the memory usage
for a predicate $P_i$ in the PAS design ($MU_{PAS}(P_i)$).

\begin{equation}
\begin{aligned}
& MU_{PAS}(P_i) = TE + ST(P_i) + \sum\limits^{NC(P_i)}_{j=1} [NT(P_{i.j}) * [SF + AT(P_{i.j})]] \\ 
& where~~
NT(P_{i.j}) \leq NT
\end{aligned}
\label{equation_mu_as}
\end{equation} 

The memory usage for the PAS design is given by the sum of the memory
size of the $TE$ and $ST(P_i)$ data structures plus the summation, for
each subgoal call, of the memory used by the private structures of
each thread multiplied by $NT(P_{i.j})$ threads, where $NT(P_{i.j})$
is the number of threads evaluating the subgoal call $P_{i.j}$ in a
private fashion. Note that $NT(P_{i.j}) \leq NT$, since the threads
consuming answers from completed subgoal frames do not allocate any
extra memory. The memory size of each particular data structure is the
same as in Yap’s original table space organization.

In summary, if comparing Eq.~\ref{equation_mu_ss} with
Eq.~\ref{equation_mu_as}, we can observe that the total memory usage
of the PAS design is always less than the total memory usage of the SS
design. Additionally, we can optimize even further this design and
allow threads to delete their private $SF$ and $AT(P_{i.j})$
structures when completing, if another thread has made public its
completed subgoal frame first. With this optimization, we can end in
practice with a single $SF$ and $AT(P_{i.j})$ structure for each
subgoal call $P_{i.j}$.

If comparing with the FS design, because we only share completed
answer tries, we also avoid some problems present in the FS
design. First, we avoid the problem of dealing with concurrent updates
to the answer tries. Second, we avoid the problem of dealing with
concurrent deletes, as in the case of using mode-directed
tabling. Since the PAS design keeps the answer tries private to each
thread, the deletion of nodes can be done without any complex
machinery to deal with concurrent delete operations. Third, we avoid
the problem of managing the different set of answers that each thread
has found. As we will see in the next subsection, this can be a
problem for batched scheduling evaluation.


\subsection{Private Answer Chaining Design}

During tabled execution, there are several points where we may have to
choose between continuing forward execution, backtracking, consuming
answers from the tables or completing subgoals. The decision about the
evaluation flow is determined by the \emph{scheduling
  strategy}. Different strategies may have a significant impact on
performance, and may lead to a different ordering of solutions to the
query goal. Arguably, the two most successful tabling scheduling
strategies are \emph{local scheduling} and \emph{batched
  scheduling}~\cite{Freire-96}.

Local scheduling tries to complete subgoals as soon as possible. When
new answers are found, they are added to the table space and the
evaluation fails. Local scheduling has the advantage of minimizing the
size of \emph{clusters of dependent subgoals}. However, it delays
propagation of answers and requires the complete evaluation of the
search space.

Batched scheduling tries to delay the need to move around the search
tree by batching the return of answers to consumer subgoals. When new
answers are found for a particular tabled subgoal, they are added to
the table space and the evaluation continues. Batched scheduling can
be a useful strategy in problems which require an eager propagation
of answers and/or do not require the complete set of answers to be
found.

With the FS design, all tables are shared. Thus, since several threads
can be inserting answers in the same answer trie, when an answer
already exists, it is not possible to determine if the answer is new
or repeated for a particular thread without further support. For local
scheduling, this is not a problem since, for repeated and new answers,
local scheduling always fails. The problem occurs with batched
scheduling that requires that only the repeated answers should
fail. Threads have then to detect, during batched evaluation, whether
an answer is new and must be propagated or whether an answer is
repeated and the evaluation must fail. The \emph{Private Answer
  Chaining (PAC)} design~\cite{areias-slate15-post} extends the FS
design to keep track of the answers that were already found and
propagated per thread and subgoal
call. Figure~\ref{fig_table_space_answer_chaining_overview}
illustrates PAC's key idea.

\begin{wrapfigure}{R}{6.25cm}
\includegraphics[width=6.25cm]{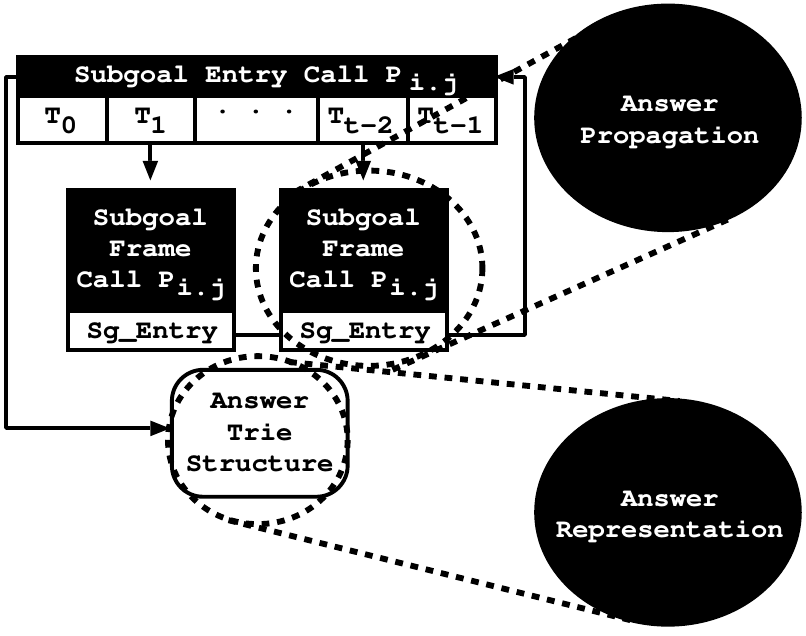}
\caption{PAC overview}
\label{fig_table_space_answer_chaining_overview}
\vspace{-\intextsep}
\end{wrapfigure}

In a nutshell, PAC splits \emph{answer propagation} from \emph{answer
  representation}, and allows the former to be privately stored in the
subgoal frame data structure of each thread, and the latter to be kept
publicly shared among threads in the answer trie data structure. This
is similar to the idea proposed by Costa and Rocha~\cite{CostaJ-09b}
for the \emph{global trie data structure}, where answers are
represented only once on a global trie and then each subgoal call has
private pointers to its set of answers. With PAC, we follow the same
key idea of representing only once each answer (as given by the FS
design), but now since we are in a concurrent environment, we use a
private chain of answers per thread to represent the answers for each
subgoal call. Later, if a thread completes a subgoal call, its PAC is
made public so that from that point on all threads can use that chain
in complete (only reading)
mode. Figure~\ref{fig_table_space_answer_chaining} illustrates the new
data structures involved in the implementation of PAC's design for a
situation where different threads are evaluating the same tabled
subgoal call $P_{i.j}$.

\begin{figure}[!ht]
\centering
\includegraphics[width=0.75\textwidth]{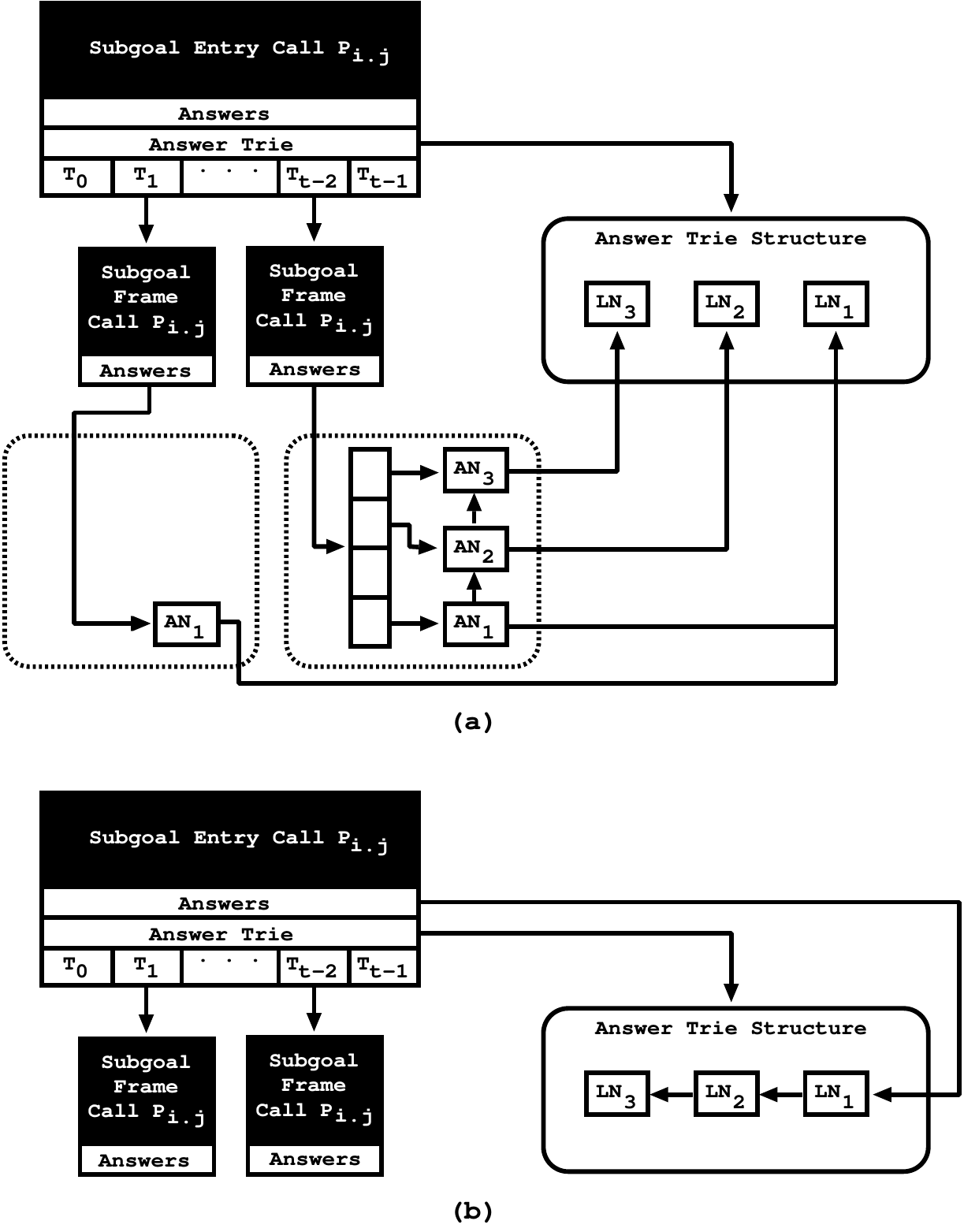}
\caption{PAC's data structures for (a) private and (b) public
  chaining}
\label{fig_table_space_answer_chaining}
\end{figure}

Figure~\ref{fig_table_space_answer_chaining}(a) shows then a situation
where two threads, $T_1$ and $T_{t-2}$, are sharing the same subgoal
entry for a call $P_{i.j}$ still under evaluation, i.e., still not yet
completed. The current state of the evaluation shows an answer trie
with 3 answers found for $P_{i.j}$. For the sake of simplicity, we are
omitting the internal answer trie nodes and we are only showing the
leaf nodes $LN_1$, $LN_2$ and $LN_3$ of each answer.

With the PAC design, the leaf nodes are not chained in the answer trie
data structure, as usual. Now, the chaining process is done privately,
and for that, we use the subgoal frame structure of each thread. On
the subgoal frame structure we added a new field, called
\emph{Answers}, to store the answers found within the execution of the
thread. In order to minimize PAC's impact, each answer node in the
private chaining has only two fields: (i) an entry pointer, which
points to the corresponding leaf node in the answer trie data
structure; and (ii) a next pointer to chain the nodes in the private
chaining. To maintain good performance, when the number of answer
nodes exceeds a certain threshold, we use a hash trie mechanism design
similar to the one presented in~\cite{Areias-ijpp15}, but without
concurrency support, since this mechanism is private to each thread.

PAC's data structures in Fig.~\ref{fig_table_space_answer_chaining}(a)
represent then two different situations. Thread $T_1$ has only found
one answer and it is using a direct answer chaining to access the leaf
node $LN_1$. Thread $T_{t-2}$ has already found three answers for
$P_{i.j}$ and it is using the hash trie mechanism within its private
chaining. In the hash trie mechanism, the answer nodes are still
chained between themselves, thus that repeated calls belonging to
thread $T_{t-2}$ can consume the answers as in the original mechanism.

Figure~\ref{fig_table_space_answer_chaining}(b) shows the state of the
subgoal call after completion. When a thread $T$ completes a subgoal
call, it frees its private consumer structures, but before doing that,
it checks whether another thread as already marked the subgoal as
completed. If no other thread has done that, then thread $T$ not only
follows its private chaining mechanism, as it would for freeing its
private nodes, but also follows the pointers to the answer trie leaf
nodes in order to create a chain inside the answer trie. Since this
procedure is done inside a critical region, no more than one thread
can be doing this chaining process. Thus, in
Fig.~\ref{fig_table_space_answer_chaining}(b), we are showing the
situation where the subgoal call $P_{i.j}$ is completed and both
threads $T_1$ and $T_{t-2}$ have already chained the leaf nodes inside
the answer trie and removed their private chaining structures.


\section{Engine Components}
\label{sec_engine_components}

This section discusses the most important engine components required
to support concurrent tabled evaluation.


\subsection{Fixed-Size Memory Allocator}

A critical component in the implementation of an efficient concurrent
tabling system is the memory allocator. Conceptually, there are two
categories of memory allocators: \emph{kernel-level} and
\emph{user-level} memory allocators. Kernel-level memory allocators
are responsible for managing memory inside the protected
sub-systems/resources of the operating system, while user-level memory
allocators are responsible for managing the \emph{heap} area, which is
the area inside the addressing space of each process where the dynamic
allocation of memory is directly done.

Evidence of the importance of a \emph{User-level Memory Allocator
  (UMA)} comes from the wide array of UMA replacement packages that
are currently available. Some examples are the
PtMalloc~\cite{ptmalloc}, Hoard~\cite{Berger-00},
TcMalloc~\cite{tcmalloc} and JeMalloc~\cite{Evans-06} memory
allocators. Many UMA subsystems were written in a time when
multiprocessor systems were rare. They used memory efficiently but
were highly serial, constituting an obstacle to the throughput of
concurrent applications, which require some form of synchronization to
protect the heap. Additionally, when a concurrent application is ran
in a multiprocessor system, other problems can occur, such \emph{heap
  blowup}, \emph{false sharing} or \emph{memory
  contention}~\cite{Masmano-06,Gidenstam-10}.

Since tabling also demands the multiple allocation and deallocation of
different sized chunks of memory, memory management plays an important
role in the efficiency of a concurrent tabling system. To satisfy this
demand, we have designed a \emph{fixed-size UMA} especially aimed for
an environment with the characteristics of concurrent
tabling~\cite{Areias-12b}. In a nutshell, fixed-size UMA separates
local and shared memory allocation, and uses local and global heaps
with pages that are formatted in blocks with the sizes of the existing
data structures. The page formatting in blocks contributes to avoid
inducing false-sharing, because different threads in different
processors do not share the same cache lines, and to avoid the heap
blowup problem, because pages migrate between local and global heaps.


At the implementation level, our proposal has local and global heaps
with pages formatted for each object type. In addition, global and
local heaps can hold free (unformatted) pages for use when a local
heap runs empty. Since modern computer architectures use pages to
handle memory, we adopted an allocation scheme based also on pages,
where each memory page only contains data structures of the same
type. In order to split memory among different threads, in our
proposal, a page can be considered a \emph{local page}, if owned by a
particular thread, or a \emph{global page},
otherwise. Figure~\ref{fig_page_based_memory} gives an overview of
this organization.


\begin{figure}[ht]
\centering
\includegraphics[width=\textwidth]{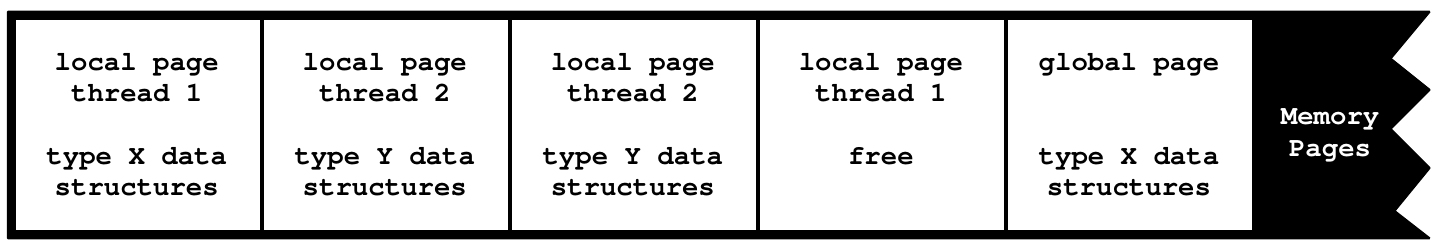}
\caption{Using pages as the basis for the fixed-size memory allocator}
\label{fig_page_based_memory}
\end{figure}

A thread can own any number of pages of the same type, of different
types and/or free pages. Any type of page (including free pages) can
be local to a thread or global, and each particular page only contains
data structures of the same type. When a page $P$ is made local to a
thread $T$, it means that $T$ gains exclusive permission to
allocate and deallocate data structures on $P$. On the other hand,
global pages have no owners and, thus, they are free from
allocate/deallocate operations. To allocate/deallocate data structures
on global pages, first the corresponding pages should be moved to a
particular thread. All running threads can access (for read or write
operations) the data structures allocated on a page, independently of
being a local or global page.

Allocating and freeing data structures are constant-time operations,
because they require only moving a structure to or from a list of free
structures. Whenever a thread $T$ requests to allocate memory for a
data structure of type $S$, it can instantly satisfy the request by
returning the first unused slot on the first available local page with
type $S$. Deallocation of a data structure of type $S$ does not free
up the memory, but only opens an unused slot on the chain of available
local pages for type $S$. Further requests to allocate memory of type
$S$ will later return the now unused memory slot. When all data
structures in a page are unused, the page is moved to the chain of
free local pages. A free local page can be reassigned later to a
different data type. When a thread $T$ runs out of available free
local pages, it must synchronize with the other threads in order to
access the global pages or the operating system's memory allocator, if
no free global page exists. This process eliminates the need to search
for suitable memory space and greatly alleviates memory
fragmentation. The only wasted space is the unused portion at the end
of a page when it cannot fit exactly with the size of the
corresponding data structures.

When a thread finishes execution, it deallocates all its private data
structures and then moves its local pages to the corresponding global
page entries. Shared structures are only deallocated when the last
running thread (usually the main thread) abolishes the tables. Thus,
if a thread $T$ allocates a data structure $D$, then it will be also
responsible for deallocating $D$, if $D$ is private to $T$, or $D$
will remain live in the tables, if $D$ is shared, even when $T$ finish
execution. In the latter case, $D$ can be only deallocated by the last
running thread $L$. In such case, $D$ is made to be local to $L$ and
the deallocation process follows as usual.


\subsection{Lock-Free Data Structures}

Another critical component in the implementation of an efficient
concurrent tabling system is the design of the data structures and
algorithms that manipulate shared tabled data. As discussed before,
Yap's table space follows a two-level trie data structure, where one
level stores the tabled subgoal calls and the other stores the
computed answers. Depending on the number of subgoal calls or answers,
the paths inside a trie, corresponding to the subgoal calls or
answers, might have several trie nodes per internal level of the trie
structure. Whenever an internal trie level becomes saturated, a
\emph{hash mechanism} is used to provide direct node access and
therefore optimize the search for the data within the trie
level. Figure~\ref{fig_trie_hash_overview} shows a hashing mechanism
for an internal trie level within the subgoal and answer data
structures.
 
\begin{wrapfigure}{R}{8.5cm}
\includegraphics[width=8.5cm]{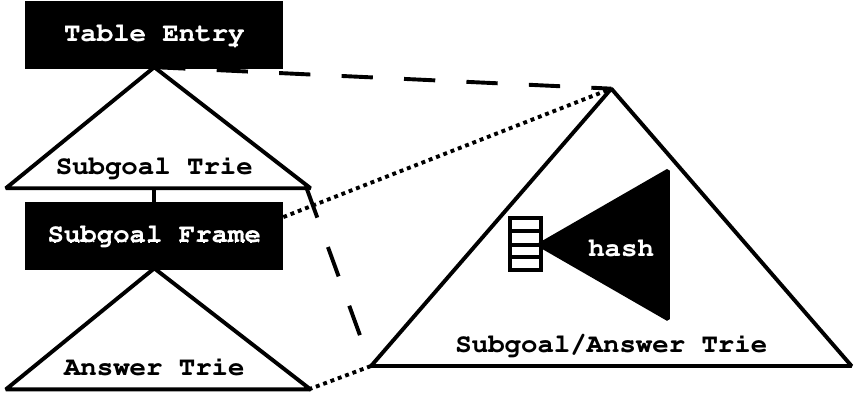}
\caption{The hashing mechanism within a trie level}
\label{fig_trie_hash_overview}
\vspace{-\intextsep}
\end{wrapfigure}

Several approaches for hashing mechanisms exist. The most important
aspect of a hashing mechanism is its behavior in terms of hash
collisions, i.e., when two keys collide and occupy the same hash table
location. Multiple solutions exist that address the collision
problem. Among these are the \emph{open addressing} and \emph{closed
  addressing} approaches~\cite{Tenenbaum-90,Knuth-98}. In open
addressing, the hash table stores the objects directly within the hash
table internal array, while in closed addressing, every object is
stored directly at an index in the hash table's internal array. In
closed addressing, collisions are solved by using other arrays or
linked lists. Yap's tabling engine uses \emph{separate
  chaining}~\cite{Knuth-98} to solve hash collisions. In the separate
chaining mechanism, the hash table is implemented as an array of
linked lists. The basic idea of separate chaining techniques is to
apply linked lists for collision management, thus in case of a
conflict a new object is appended to the linked list.

Our initial approach to deal with concurrency within the trie
structures was to use \emph{lock-based
  strategies}~\cite{Areias-12a}. However, lock-based data structures
have their performance restrained by multiple problems, such as,
convoying, low fault tolerance and delays occurred inside a critical
region. We thus shifted our attention in to taking advantage of the
low-level \emph{Compare-And-Swap (CAS)} operation, that nowadays can
be widely found on many common architectures. The CAS operation is an
\emph{atomic instruction} that compares the contents of a memory
location to a given value and, if they are the same, updates the
contents of that memory location to a given new value. The CAS
operation is at the heart of many \emph{lock-free} (also known as
non-blocking) data structures~\cite{Herlihy-87}. Non-blocking data
structures offer several advantages over their blocking counterparts,
such as being immune to deadlocks, lock convoying and priority
inversion, and being preemption tolerant, which ensures similar
performance regardless of the thread scheduling policy. Using
lock-free techniques, we have created two proposals for concurrent
hashing data structures especially aimed to be as effective as
possible in a concurrent tabling engine and without introducing
significant overheads in the sequential
execution. Figure~\ref{fig_trie_hash_proposals} shows the architecture of
the two proposals.

\begin{wrapfigure}{R}{7.5cm}
\includegraphics[width=7.5cm]{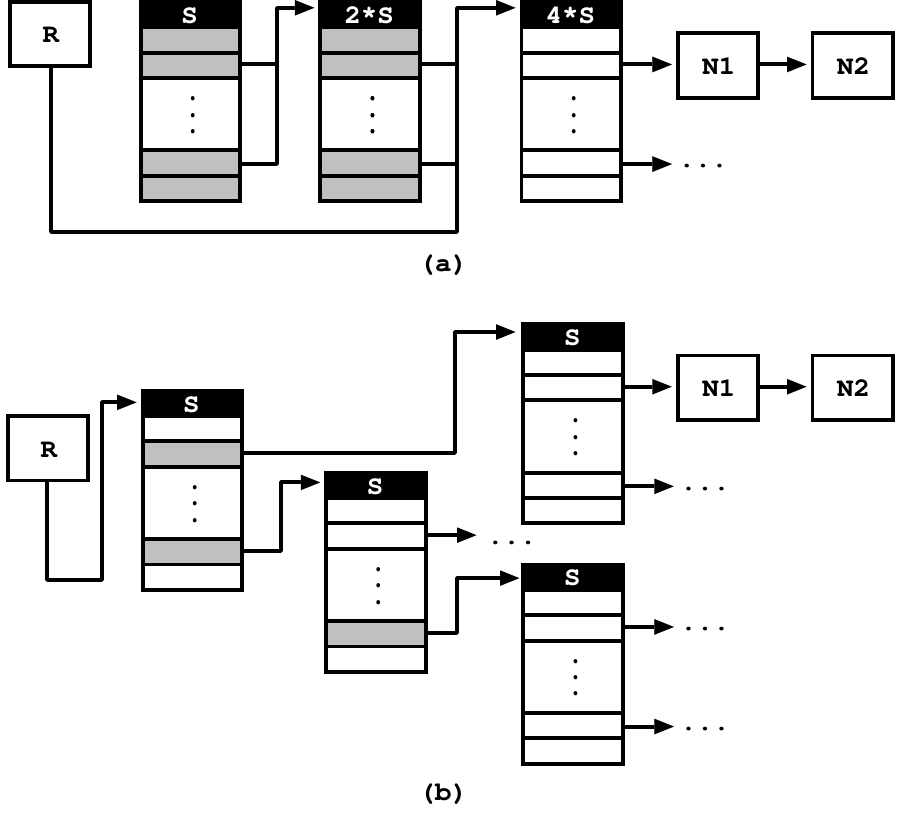}
\caption{Architecture of the two lock-free hash proposals}
\label{fig_trie_hash_proposals}
\vspace{-\intextsep}
\end{wrapfigure}

Both proposals include a root node $R$ and have a hashing mechanism
composed by a bucket array and a hash function that maps the nodes
into the entries in the bucket array. The first proposal, shown in
Fig.~\ref{fig_trie_hash_proposals}(a), implements a dynamic resizing
of the hash tables by doubling the size of the bucket array whenever
it becomes saturated~\cite{Areias-14}. It starts with an initial
bucket array with $S$ entries and, whenever the hash bucket array
becomes saturated, i.e., when the number of nodes in a bucket entry
exceeds a pre-defined threshold value and the total number of nodes
exceeds $S$, then the bucket array is expanded to a new one with $2*S$
entries. This expansion mechanism is executed by a single thread,
meaning that no more than one expansion can be done at a time. If the
thread executing the expansion suspends for some reason (for example,
be suspended by the operating system scheduler), then all the
remaining threads can still be searching and inserting nodes in the
trie level that is being expanded in a lock-free fashion, but no other
thread will be able to expand the same trie level. When the process of
bucket expansion is completed for all $S$ bucket entries, node $R$ is
updated to refer to the new bucket array with $2*S$ entries. Since the
size of the hashes doubles on each expansion, this proposal is highly
inappropriate to be integrated with the fixed-size UMA.

The second proposal, shown in Fig.~\ref{fig_trie_hash_proposals}(b),
was designed to be compatible with the fixed-size UMA. It is based on
\emph{hash tries data structures} and is aimed to be a simpler and
more efficient lock-free proposal that disperses the synchronization
regions as much as possible in order to minimize problems such as
false sharing or cache memory ping pong
effects~\cite{Areias-ijpp15}. Hash tries (or hash array mapped tries)
are another trie-based data structure with nearly ideal
characteristics for the implementation of hash
tables~\cite{Bagwell-01}. As shown in
Fig.~\ref{fig_trie_hash_overview}, in this proposal, we still have the
original subgoal/answer trie data structures which include a hashing
mechanism whenever an internal trie level becomes saturated, but now
the hashing mechanism is implemented using hash tries data structures.

An essential property of the trie data structure is that common
prefixes are stored only once~\cite{Fredkin-62}, which in the context
of hash tables allows us to efficiently solve the problems of setting
the size of the initial hash table and of dynamically resizing it in
order to deal with hash collisions. In a nutshell, a hash trie is
composed by \emph{internal hash arrays} and \emph{leaf nodes} (nodes
$N1$ and $N2$ in Fig.~\ref{fig_trie_hash_proposals}(b)) and the
internal hash arrays implement a hierarchy of hash levels of fixed
size $S=2^w$. To map a node into this hierarchy, first we compute the
hash value $h$ and then we use chunks of $w$ bits from $h$ to index
the entry in the appropriate hash level. Hash collisions are solved by
simply walking down the tree as we consume successive chunks of $w$
bits from the hash value $h$. Whenever a hash bucket array becomes
saturated, i.e., when the number of nodes in a bucket entry exceeds a
pre-defined threshold value, then the bucket array is expanded to a
new one with $S$ entries. As for the previous proposal, this expansion
mechanism is executed by a single thread. If the thread executing the
expansion suspends for some reason, then all the remaining threads can
still be searching and inserting nodes in the bucket entry in a
lock-free fashion. Compared with the previous proposal, this proposal
has a fined grain synchronization region, because it blocks only one
bucket entry per expansion.


\section{Performance Analysis}


Our work on combining tabling with parallelism started some years ago
when the first approach for implicit parallel tabling was
presented~\cite{Rocha-99a}. Such approach lead to the design and
implementation of an or-parallel tabling system, named
OPTYap~\cite{Rocha-01}. In OPTYap, each worker behaves like a
sequential tabling engine that fully implements all the tabling
operations. During the evaluation, the or-parallel component of the
system is triggered to allow synchronized access to the table space
and to the common parts of the search tree, or to schedule workers
running out of alternatives to exploit.

OPTYap has shown promising results in several tabled
benchmarks~\cite{Rocha-01}. The worst results were obtained in the
transitive closure of the right recursive definition of the path
problem using a grid configuration, where no speedups were obtained
with multiple workers. The bad results achieved in this benchmark were
explained by the higher rate of contention in Yap's internal data
structures, namely in the subgoal frames. A closer analysis showed
that the number of suspension/resumptions operations is approximately
constant with the increase in the number of workers, thus suggesting
that there are answers that can only be found when other answers are
also found, and that the process of finding such answers cannot be
anticipated. In consequence, suspended branches have always to be
resumed to consume the answers that could not be found sooner.

More recently, we shifted our research towards explicit parallelism
specially aimed for multithreaded environments. Initial results were
promising as we were able to significantly reduce the contention for
concurrent table accesses~\cite{Areias-12a,Areias-12b}. Later, we
presented first speedup results for the right recursive definition of
the path problem, using a naive multithreaded scheduler that considers
a set of different starting points (queries) in the graph to be run by
a set of different threads. In this work, Yap obtained a maximum
speedup of $10.24$ for 16 threads~\cite{Areias-ijpp15}. Although,
these results were better than the ones presented earlier for implicit
parallelism, they were mostly due to the different scheduler strategy
adopted to evaluate the benchmark. On the other hand, such work also
showed that with 32 threads, no improvements were obtained compared
with 16 threads. A closer analysis showed again that such behavior was
related with the large number of subgoal call dependencies in the
program. We thus believe that the ordering to which the answers are
found in some problems, like in the evaluation of the transitive
closure of strongly connected graphs, is a major problem that
restricts concurrency/parallelism in tabled programs.



In what follows, we start with worst case scenarios to study how
independent flows of execution running simultaneously interfere at the
low-level engine. Next, we focus on two well-known dynamic programming
problems, the Knapsack and LCS problems, and we discuss how we were
able to scale their execution by using Yap's multithreaded tabling
engine. The environment of our experiments was a machine with 32-Core
AMD Opteron (TM) Processor 6274 (2 sockets with 16 cores each) with
32GB of main memory, running the Linux kernel 3.16.7-200.fc20.x86 64
with Yap Prolog 6.3\footnote{Available at
  \url{https://github.com/miar/yap-6.3}.}.



\subsection{Experiments on Worst Case Scenarios}

We begin with experimental results for concurrent tabled evaluation
using local and batched scheduling with the NS, SS and PAC designs for
worst case scenarios that stress the trie data structures. For the
sake of simplicity, we will present only the best results, which were
always achieved when using the fixed-size UMA and the second lock-free
proposal. We do not show results for the CS and PAS designs because
they are not meaningful in this context, as we will see next. The
results for the FS design are identical to PAC's results, except for
batched scheduling which FS does not support.

For benchmarking, we used the set of tabling benchmarks
from~\cite{Areias-12b} which includes 19 different programs in
total. We choose these benchmarks because they have characteristics
that cover a wide number of scenarios in terms of trie usage. They
create different trie configurations with lower and higher number of
nodes and depths, and also have different demands in terms of trie
traversing\footnote{We show a more detailed characterization of the
  benchmark set in~\ref{appendix_bechmark_details}.}.


To create worst case scenarios that stress the table data structures,
\emph{we ran all threads starting with the same query goal}. By doing
this, it is expected that threads will access the table space, to
check/insert for subgoals and answers at similar times, thus causing a
huge stress on the same critical regions. In particular, for this set
of benchmarks, this will be especially the case for the answer tries,
since the number of answers clearly exceeds the number of
subgoals. Please note that, despite all threads are executing the same
program they have independent flows of execution, i.e., we are not
trying to parallelize the execution, but study how independent flows
of execution (in this case, identical flows of execution) interfere at
the low-level engine. By focusing first on the worst case scenarios,
we can infer the \emph{highest overhead ratios when compared with one
  thread} (or the lowest bounds of performance) that each design might
have when used with multiple threads in other real world
applications. For each table design, there are two main sources of
overheads: (i) the synchronization required to interact with the
memory allocator, which is proportional to the memory consumption
bounds discussed in Section~\ref{sec_concurrent_table_designs}; and
(ii) the synchronization required to interact with the table space,
which is proportional to the number of data structures that can be
accessed concurrently in each design. The overheads originated from
these two sources are not easy to isolate in order to evaluate the
weight of each in the execution time. The design of the memory
allocator clearly plays an important role in the former source of
overhead and the use of lock-free data structures is important to
soften the weight of the latter.

Table~\ref{tab_batched_overhead} shows the overhead ratios, when
compared with the NS design with 1 thread (running with local
scheduling and without the fixed-size UMA) for the NS, SS and PAC
designs running 1, 8, 16, 24 and 32 threads with local and batched
scheduling on the set of benchmarks. In order to give a fair weight to
each benchmark, the overhead ratio is calculated as follows. We begin
by running ten times each benchmark $B$ for each design $D$ with $T$
threads. Then, we calculate the average of those ten runs and use that
value ($D_{BT}$) to put it in perspective against the base time, which
is the average of the ten runs of the NS design with one thread
($NS_{B1}$)\footnote{The base times for the NS design are presented in
  Table~\ref{tab_benchs} in~\ref{appendix_bechmark_details}.}. For
that, we use the following formula for the overhead $O_{DBT} = D_{BT}
/ NS_{B1}$. After calculating all the overheads $O_{DBT}$ for a
certain design $D$ and number of threads $T$ corresponding to the
several benchmarks $B$, we calculate the respective minimum, average,
maximum and standard deviation overhead ratios. The higher the
overhead, the worse the design behaves. An overhead of 1.00 means that
the design behaves similarly to the base case and is thus immune to
the fact of having other execution flows running simultaneously.


\begin{table}[t]
\centering
\caption{Overhead ratios, when compared with the NS design with 1
  thread (running with local scheduling and without the fixed-size
  UMA), for the NS, SS and PAC designs running 1, 8, 16, 24 and 32
  threads with local and batched scheduling (best ratios by row and by
  design for the Minimum, Average and Maximum are in bold)}
\begin{tabular}{ll|cc|cc|cc}
\multicolumn{2}{l|}{\multirow{2}{*}{\bf Threads}} &
\multicolumn{2}{c|}{\multirow{1}{*}{\bf NS}} &
\multicolumn{2}{c|}{\multirow{1}{*}{\bf SS}} & 
\multicolumn{2}{c}{\multirow{1}{*}{\bf PAC}} \\
& 
& \multicolumn{1}{c}{\bf Local}
& \multicolumn{1}{c|}{\bf Batched}
& \multicolumn{1}{c}{\bf Local}
& \multicolumn{1}{c|}{\bf Batched}
& \multicolumn{1}{c}{\bf Local}
& \multicolumn{1}{c}{\bf Batched}\\
\hline\hline
\multirow{4}{*}{\bf 1}
& {\bf Min }& {\bf 0.53}& 0.55& {\bf 0.54}& 0.55& 1.01& {\bf 0.95}\\
& {\bf Avg }& {\bf 0.78}& 0.82& {\bf 0.84}& 0.90& {\bf 1.30}& 1.46\\
& {\bf Max }& 1.06& {\bf 1.05}& {\bf 1.04}& {\bf 1.04}& {\bf 1.76}& 2.33\\
& {\bf StD }& 0.15& 0.14& 0.17& 0.16& 0.22& 0.44\\
\hline
\multirow{4}{*}{\bf 8}
& {\bf Min }& 0.66& {\bf 0.63}& 0.66& {\bf 0.63}& 1.16&{\bf  0.99}\\
& {\bf Avg }& {\bf 0.85}& 0.88& {\bf 0.92}& 0.93& {\bf 1.88}& 1.95\\
& {\bf Max }& {\bf 1.12}& 1.14& 1.20& {\bf 1.15}& {\bf 2.82}& 3.49\\
& {\bf StD }& 0.13& 0.14& 0.15& 0.14& 0.60& 0.79\\
\hline
\multirow{4}{*}{\bf 16}
& {\bf Min }& 0.85& {\bf 0.75}& 0.82& {\bf 0.77}& 1.17& {\bf 1.06}\\
& {\bf Avg }& {\bf 0.98}& 1.00& {\bf 1.04}& 1.05& {\bf 1.97}& 2.08\\
& {\bf Max }& {\bf 1.16}& 1.31& 1.31& {\bf 1.28}& {\bf 3.14}& 3.69\\
& {\bf StD }& 0.09& 0.17& 0.12& 0.13& 0.65& 0.83\\
\hline
\multirow{4}{*}{\bf 24}
& {\bf Min }& {\bf 0.91}& 0.93& 1.02& {\bf 0.98}& 1.16& {\bf 1.09}\\
& {\bf Avg }& {\bf 1.15}& 1.16& 1.22& {\bf 1.19}& {\bf 2.06}& 2.19\\
& {\bf Max }& 1.72& {\bf 1.60}& 1.81& {\bf 1.61}& {\bf 3.49}& 4.08\\
& {\bf StD }& 0.20& 0.21& 0.18& 0.16& 0.70& 0.91\\
\hline
\multirow{4}{*}{\bf 32}
& {\bf Min }& 1.05& {\bf 1.04}& {\bf 1.07}& 1.12& 1.33& {\bf 1.26}\\
& {\bf Avg }& 1.51& {\bf 1.49}& 1.54& {\bf 1.51}& {\bf 2.24}& 2.41\\
& {\bf Max }& {\bf 2.52}& 2.63& {\bf 2.52}& 2.62& {\bf 3.71}& 4.51\\
& {\bf StD }& 0.45& 0.45& 0.42& 0.43& 0.74& 1.02\\
\end{tabular}
\label{tab_batched_overhead}
\end{table}

By observing Table~\ref{tab_batched_overhead}, we can notice that for
one thread, on average, local scheduling is sightly better than
batched on the three designs. As we increase the number of threads,
one can observe that, for the NS and SS designs, both scheduling
strategies show very close minimum, average and maximum overhead
ratios. For the PAC design, the best minimum overhead ratio is always
for batched scheduling but, for the average and maximum overhead
ratio, local scheduling is always better than batched scheduling. For
the average and maximum overhead ratios, the difference between local
and batched scheduling in the PAC design is slightly higher than in
the NS and SS designs, which can be read as an indication of the
overhead that PAC introduces into the FS design. Recall that whenever
an answer is found during the evaluation, PAC requires that threads
traverse their private consumer data structures to check if the answer
was already found (and propagated).

Finally, we would like to draw the reader's attention to the worst
results obtained (the ones represented by the maximum rows). For 32
threads, the NS, SS and PAC designs have overhead results of
2.52/2.63, 2.52/2.62 and 3.71/4.51, respectively for local/batched
scheduling. These are outstanding results if we compare them with the
results obtained in our first approach~\cite{Areias-12a}, without the
fixed-size UMA and without lock-free data structures, where for local
scheduling with 24 threads, the NS, SS and FS designs had average
overhead results of 18.64, 17.72 and 5.42, and worst overhead results
of 47.89, 47.60 and 11.49, respectively. Results for the XSB Prolog
system, also presented in~\cite{Areias-12a}, for the same set of
benchmarks showed average overhead results of 6.1 and worst overhead
results of 10.31. We thus argue that the combination of a fixed-size
UMA with lock-free data structures is the best proposal to support
concurrency in general purpose multithreaded tabling applications.


\subsection{Experiments on Dynamic Programming Problems}

As mentioned in subsection~\ref{sec_implicit_explicit}, with a fully
explicit approach, it is left to the user to break the problem into
tasks for concurrent execution, assign them to the available workers
and control the execution and the synchronization points, i.e., it is
\emph{not} the tabled execution system that is responsible for doing
that, the execution system only provides the mechanisms/interface for
allowing simultaneous flows of execution. Thus, the user-level
scheduler implemented by the user, to support the division of the
problem in concurrent tasks and control the execution and
synchronization points, plays a key role in the process of trying to
obtain speedups through parallel execution. This means that we cannot
evaluate the infrastructure of a concurrent tabling engine just by
running some benchmarks if we do not put a big effort in a good
scheduler design, which is independent from such infrastructure.

In this subsection, we show how dynamic programming problems fit well
with concurrent tabled evaluation~\cite{areias-jss16}. To do so, we
used two well-known dynamic programming problems, the \emph{Knapsack}
and the \emph{Longest Common Subsequence (LCS)} problems. The Knapsack
problem~\cite{Martello-90} is a well-known problem in combinatorial
optimization that can be found in many domains such as logistics,
manufacturing, finance or telecommunications. Given a set of items,
each with a weight and a profit, the goal is to determine the number
of items of each kind to include in a collection so that the total
weight is equal or less than a given capacity and the total profit is
as much as possible. The problem of computing the length of the LCS is
representative of a class of dynamic programming algorithms for string
comparison that are based on getting a similarity degree. A good
example is the sequence alignment, which is a fundamental technique
for biologists to investigate the similarity between species.

For the Knapsack problem, we fixed the number of items and capacity,
respectively, 1,600 and 3,200. For the LCS problem, we used sequences
with a fixed size of 3,200 symbols. Then, for each problem, we created
three different datasets, D$_{10}$, D$_{30}$ and D$_{50}$, meaning
that the values for the weights/profits for the Knapsack problem and
the symbols for LCS problem where randomly generated in an interval
between 1 and 10\%, 30\% and 50\% of the total number of
items/symbols, respectively.


For both problems, we implemented either \emph{multithreaded tabled
  top-down} and \emph{multithreaded tabled bottom-up} user-level
scheduler approaches. For the top-down approaches, we followed Stivala
\emph{et al.}'s work~\cite{Stivala-10} where a set of threads solve
the entire program independently but with a randomized choice of the
sub-problems. Figure~\ref{fig_knap_top_down_eval_tree-mt} illustrates
how this was applied in the case of the Knapsack problem considering
$N$ items and $C$ capacity. A set of threads begin the execution with
the same top query tabled call, $ks(N,C)$ in
Fig.~\ref{fig_knap_top_down_eval_tree-mt}, but then, on each level of
the evaluation tree, each thread randomly decides which branch will be
evaluated first, the exclude item branch (\emph{Exc}) or the include
item branch (\emph{Inc}). This random decision is aimed to disperse
the threads through the evaluation tree\footnote{A similar strategy
  was followed for the LCS problem.}.

\begin{figure}[!ht]
\centering
\includegraphics[width=12cm]{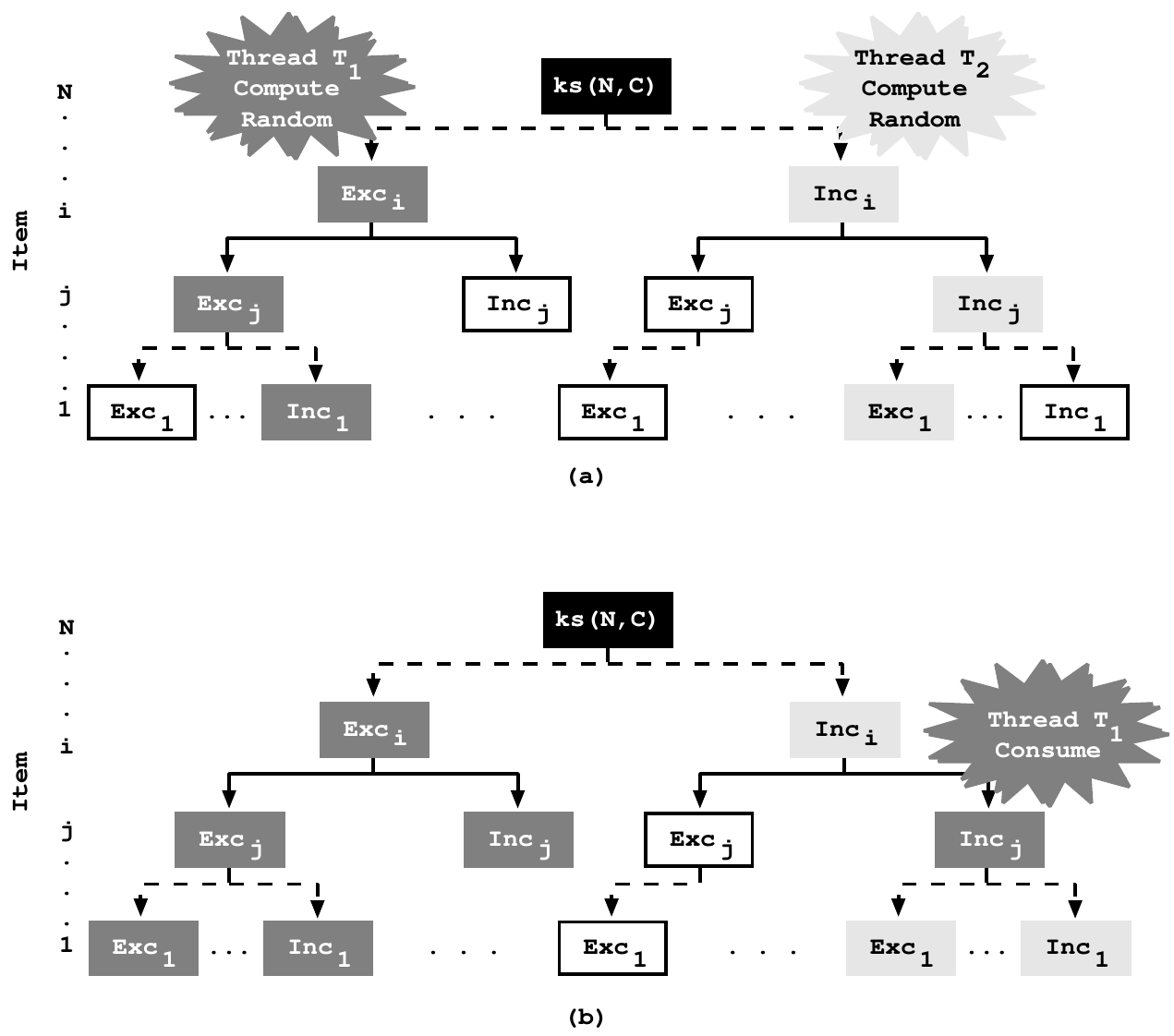}
\caption{Knapsack multithreaded tabled top-down approach}
\label{fig_knap_top_down_eval_tree-mt}
\end{figure}

Figure~\ref{fig_knap_top_down_eval_tree-mt}(a) shows a situation
where, starting from a certain item \emph{i} and capacity, thread
$T_1$ is evaluating the left branch of the tree ($Exc_i$), while
thread $T_2$ is evaluating the right branch ($Inc_i$)\footnote{For
  simplicity of presentation, the capacity values are not shown in
  Fig.~\ref{fig_knap_top_down_eval_tree-mt}. Note however that the
  tabled call corresponding to a $Exc_i$ or $Inc_i$ branch in
  different parts of the evaluation tree can be called with different
  capacity values, meaning that, in fact, they are different tabled
  calls. Only when the item and the capacity values are the same, the
  tabled call is also the same.}. Notice that although the threads are
evaluating the branches of the tree in a random order, they still have
to evaluate all branches so that they can find the optimal solution
for the Knapsack problem. So, the random decision is only about the
evaluation order of the branches and not about skipping
branches. Figure~\ref{fig_knap_top_down_eval_tree-mt}(b) shows then a
situation where thread $T_1$ has completely evaluated the $Exc_i$
branch of the tree and has moved to the $Inc_i$ branch where it is now
evaluating a $Inc_j$ branch already evaluated by thread $T_2$. Since
the result for that branch is already stored in the corresponding
table, thread $T_1$ simply consumes the result, thus avoiding its
computation.

For each sub-problem, two alternative execution choices are available:
(i) exclude first and include next, or (ii) include first and exclude
next. The randomized choice of sub-problems results in the threads
diverging to compute different sub-problems simultaneously while
reusing the sub-problem's results computed in the meantime by the
other threads. Since the number of sub-problem is usually high in
this kind of problems, it is expected that the available set of
sub-problems will be evenly divided by the number of available threads
resulting in less computation time required to reach the final result.

We have implemented two alternative versions. The first version
(YAP$_{TD_1}$) simply follows Stivala et al.'s original random
approach. The second version (YAP$_{TD_2}$) extends the first one with
an extra step where the computation is first moved forward (i.e.,
to a deeper item/symbol in the evaluation tree) using a random
displacement of the number of items/symbols (we used a $maxRandom$
value corresponding to $10\%$ of the total number of items/symbols in
the problem) and only then the computation is performed for the next
item/symbol, as usual.

For the bottom-up user-level scheduler approaches (YAP$_{BU}$), the
Knapsack version is based on~\cite{Kumar-94} and the LCS version is
based on~\cite{Kumar-02}. Figure~\ref{fig_knapsack_matrix-mt}
illustrates the case of the Knapsack problem for $N$ items and $C$
capacity. The evaluation is done bottom-up with increasing capacities
$c \in \{1,...,C\}$ until computing the maximum profit for the given
capacity $C$, which corresponds to the query goal $ks(N,C)$. The
bottom-up characteristic comes from the fact that, given a Knapsack
with capacity $c$ and using $i$ items, $i < N$, the decision to
include the next item $j$, $j=i+1$, leads to two situations: (i) if
$j$ is not included, the Knapsack profit is unchanged; (ii) if $j$ is
included, the profit is the result of the maximum profit of the
Knapsack with the same $i$ items but with capacity $c - w_j$ (the
capacity needed to include the weight $w_j$ of item $j$) increased by
$p_j$ (the profit of the item $j$ being included). The algorithm then
decides whether to include an item based on which choice leads
to maximum profit. Thus, computing a row $i$ depends only on the
sub-problems at row $i-1$. A possible parallelization is, for each
row, to divide the computation of the $C$ columns between the
available threads and then wait for all threads to complete in order
to synchronize before computing the next
row. Figure~\ref{fig_knapsack_matrix-mt}(a) shows an example with two
threads, $T_1$ and $T_2$, where the computation of the $C$ columns
within the evaluation matrix is divided in smaller chunks and each
chunk is evaluated by the same thread.

\begin{figure}[!ht]
\centering
\includegraphics[width=\textwidth]{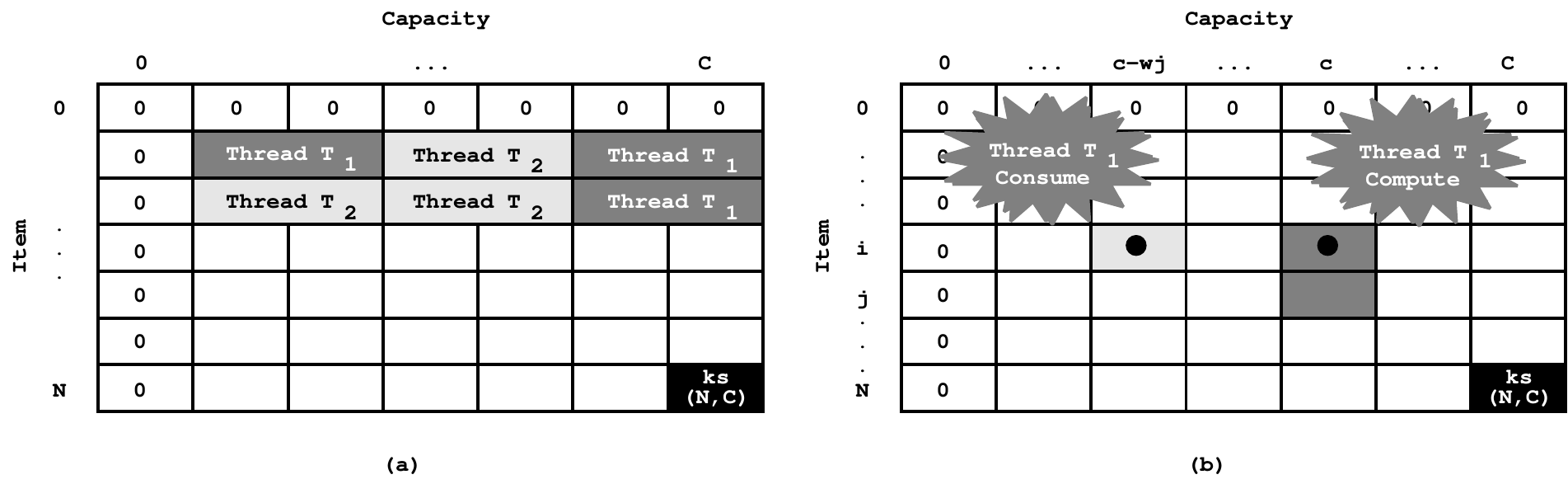}
\caption{Knapsack multithreaded tabled bottom-up approach}
\label{fig_knapsack_matrix-mt}
\end{figure}

Figure~\ref{fig_knapsack_matrix-mt}(b) shows then a situation where
the cell corresponding to call $ks(j,c)$ is being evaluated by thread
$T_1$. As explained above, this involves computing the values for
$ks(i,c-w_j)$ and $ks(i,c)$ (cells denoted with a black circle in
Fig.\ref{fig_knapsack_matrix-mt}(b)). Since we want to take advantage
of the built-in tabling mechanism, we can avoid the synchronization
between rows mentioned above. Hence, when a sub-problem in the
previous row was not computed yet (i.e., marked as completed in one of
the subgoal frames for the given call), instead of waiting for the
corresponding result to be computed by another thread, the current
thread starts also its computation and for that it can recursively
call many other sub-problems not computed yet. Despite this can lead
to redundant sub-computations, it avoids synchronization. In fact, as
we will see, this approach showed to be very effective. The situation
in Fig.~\ref{fig_knapsack_matrix-mt}(b) shows the case where thread
$T_1$ consumes the value for call $ks(i,c-w_j)$ from the tables
(already computed by $T_2$) but computes the value for $ks(i,c)$.


To evaluate the performance of the multithreaded tabled top-down and
bottom-up approaches, we used local scheduling with the PAS design,
together with the fixed-size UMA and the support for lock-free data
structures within the subgoal trie data structure. For the bottom-up
approaches, standard tabling is enough but for the top-down
approaches, mode-directed tabling is mandatory since we want to
maximize the profit, in the case of the Knapsack problem, and the
length of the longest common subsequence, in the case of the LCS
problem. To put our results in perspective, we also experimented with
XSB Prolog version 3.4.0 using the shared tables
model~\cite{Marques-08} for the bottom-up approaches (since XSB does
not support mode-directed tabling, it could not be used for the
top-down approaches).

\begin{table}[t]
\centering
\caption{Execution time, in milliseconds, for one thread (sequential
  and multithreaded version) and corresponding speedup (against one
  thread running the multithreaded version) for the execution with 8,
  16, 24 and 32 threads, for the top-down and bottom-up approaches of
  the Knapsack problem using the Yap and XSB Prolog systems}
\begin{tabular}{cc||c|c|rrrr||r}
\multicolumn{2}{c||}{\multirow{3}{*}{\bf System/Dataset}}
& {\bf Seq.}
& \multicolumn{5}{c||}{\bf \# Threads (p)} 
& {\bf Best} \\
&
& {\bf Time} 
& {\bf Time (T$_1$)}  & \multicolumn{4}{c||}{\bf Speedup (T$_1$/T$_p$)}
& {\bf Time}\\ 
&
& {\bf (T$_{seq}$)}
& {\bf 1} & {\bf 8} & {\bf 16} & {\bf 24} & {\bf 32} &  {\bf (T$_{best}$)}\\
\hline\hline
\multicolumn{8}{l}{\bf Top-Down Approaches} \\
\multicolumn{1}{c}{\multirow{3}{*}{\bf YAP$_{TD_1}$}} 
& {\bf D$_{10}$} & 14,330 & 19,316 & 1.96 & {\bf 2.12} &      2.04  & 1.95 & 9,115 \\
& {\bf D$_{30}$} & 14,725 & 19,332 & 3.57 & {\bf 4.17} &      4.06  & 3.93 & 4,639 \\
& {\bf D$_{50}$} & 14,729 & 18,857 & 4.74 &      6.28  & {\bf 6.44} & 6.41 & 2,930 \\
\hline
\multicolumn{1}{c}{\multirow{3}{*}{\bf YAP$_{TD_2}$}} 
& {\bf D$_{10}$} & 19,667 & 24,444 & 6.78 & 12.35 & 15.44 & {\bf 18.19} & 1,344 \\
& {\bf D$_{30}$} & 19,847 & 25,609 & 7.15 & 13.83 & 17.37 & {\bf 20.47} & 1,251 \\
& {\bf D$_{50}$} & 19,985 & 25,429 & 7.27 & 13.70 & 17.35 & {\bf 20.62} & 1,233 \\
\hline
\multicolumn{8}{l}{\bf Bottom-Up Approaches} \\
\multicolumn{1}{c}{\multirow{3}{*}{\bf YAP$_{BU}$}} 
& {\bf D$_{10}$} & 12,614 & 17,940 & 7.17 & 13.97 & 18.31 & {\bf 22.15} & 810 \\
& {\bf D$_{30}$} & 12,364 & 17,856 & 7.23 & 13.78 & 18.26 & {\bf 21.94} & 814 \\
& {\bf D$_{50}$} & 12,653 & 17,499 & 7.25 & 14.01 & 18.34 & {\bf 21.76} & 804 \\
\hline
\multicolumn{1}{c}{\multirow{3}{*}{\bf XSB$_{BU}$}} 
& {\bf D$_{10}$} & {\bf 32,297} & 38,965 & 0.87 & 0.66 & 0.62 & 0.55 & 32,297 \\
& {\bf D$_{30}$} & {\bf 32,063} & 38,007 & 0.86 & 0.61 & 0.56 & 0.53 & 32,063 \\
& {\bf D$_{50}$} & {\bf 31,893} & 38,534 & 0.84 & 0.58 & 0.57 & 0.57 & 31,893 \\
\end{tabular}
\label{tab_knapsack}
\end{table}

\begin{table}[t]
\centering
\caption{Execution time, in milliseconds, for one thread (sequential
  and multithreaded version) and corresponding speedup (against one
  thread running the multithreaded version) for the execution with 8,
  16, 24 and 32 threads, for the top-down and bottom-up approaches of
  the LCS problem using the Yap and XSB Prolog systems}
\begin{tabular}{cc||c|c|rrrr||r}
\multicolumn{2}{c||}{\multirow{3}{*}{\bf System/Dataset}}
& {\bf Seq.}
& \multicolumn{5}{c||}{\bf \# Threads (p)} 
& {\bf Best} \\
&
& {\bf Time} 
& {\bf Time (T$_1$)}  & \multicolumn{4}{c||}{\bf Speedup (T$_1$/T$_p$)}
& {\bf Time}\\ 
&
& {\bf (T$_{seq}$)}
& {\bf 1} & {\bf 8} & {\bf 16} & {\bf 24} & {\bf 32} &  {\bf (T$_{best}$)}\\
\hline\hline
\multicolumn{8}{l}{\bf Top-Down Approaches} \\
\multicolumn{1}{c}{\multirow{3}{*}{\bf YAP$_{TD_1}$}} 
& {\bf D$_{10}$} & 26,030 & 33,969 & {\bf 1.58} & 1.53 & 1.50 & 1.42 & 21,509\\
& {\bf D$_{30}$} & 26,523 & 34,213 & {\bf 1.60} & 1.54 & 1.50 & 1.42 & 21,424\\
& {\bf D$_{50}$} & 26,545 & 34,234 & {\bf 1.60} & 1.54 & 1.51 & 1.40 & 21,408\\
\hline
\multicolumn{1}{c}{\multirow{3}{*}{\bf YAP$_{TD_2}$}} 
& {\bf D$_{10}$} & 34,565 & 44,371 & 7.23 & 13.23 & 16.45 & {\bf 19.74} & 2,248\\
& {\bf D$_{30}$} & 34,284 & 44,191 & 7.12 & 13.09 & 16.52 & {\bf 19.77} & 2,235\\
& {\bf D$_{50}$} & 33,989 & 44,158 & 7.06 & 13.30 & 16.49 & {\bf 19.58} & 2,255\\
\hline
\multicolumn{8}{l}{\bf Bottom-Up Approaches} \\
\multicolumn{1}{c}{\multirow{3}{*}{\bf YAP$_{BU}$}} 
& {\bf D$_{10}$} & 20,799 & 28,909 & 6.47 & 12.21 & 16.48 & {\bf 20.32} & 1,423\\
& {\bf D$_{30}$} & 21,174 & 28,904 & 6.94 & 12.61 & 16.63 & {\bf 20.40} & 1,417\\
& {\bf D$_{50}$} & 21,166 & 28,857 & 6.44 & 12.31 & 16.44 & {\bf 20.52} & 1,406\\
\hline
\multicolumn{1}{c}{\multirow{3}{*}{\bf XSB$_{BU}$}} 
& {\bf D$_{10}$} & {\bf 60,983} & 74,108 & n.a. & n.a. & n.a. & n.a. &  60,983\\
& {\bf D$_{30}$} & {\bf 59,496} & 74,410 & n.a. & n.a. & n.a. & n.a. &  59,496\\
& {\bf D$_{50}$} & {\bf 59,700} & 74,628 & n.a. & n.a. & n.a. & n.a. &  59,700\\
\end{tabular}
\label{tab_lcs}
\end{table}

Table~\ref{tab_knapsack} and Table~\ref{tab_lcs} show the average of
10 runs results obtained, respectively, for the Knapsack and LCS
problems for both top-down and bottom-up approaches using the Yap and
XSB Prolog systems. The columns of both tables show the following
information. The first column describes the system and the dataset
used. The second column (T$_{seq}$) shows the sequential execution
time in milliseconds. For T$_{seq}$, the Prolog systems where compiled
without multithreaded support and ran without multithreaded code. The
next five columns show the execution time for one thread (T$_1$) and
the corresponding speedup for the execution with 8, 16, 24 and 32
threads (columns T$_1$/T$_p$). For each system/dataset configuration,
the results in bold highlight the column where the best execution time
was obtained and the last column (T$_{best}$) presents such result in
milliseconds.


Analyzing the general picture of both tables, one can observe that the
sequential time (T$_{seq}$) is always lower than the multithreaded
time (T$_{1}$). This is expected since the multithreaded version is
compiled and equipped with all the complex machinery required to
support concurrency in Yap, which includes not only all the new tabled
stuff but also all the base support for multithreaded in Yap.

When scaling the problem with multiple threads, the YAP$_{TD_2}$
top-down and YAP$_{BU}$ bottom-up approaches have the best results
with excellent speedups for 8, 16, 24 and 32 threads. In particular,
for 32 threads, they obtain speedups around 21 and 20, respectively,
for the Knapsack and LCS problems (T$_{1}$/T$_{best}$). If comparing
against the sequential version for 32 threads (not shown in the
tables), the speedups are around 15 and 16, respectively, for the
Knapsack and LCS problems (T$_{seq}$/T$_{best}$). The results for the
top-down YAP$_{TD_1}$ approach are not so interesting, regardless of
the fact that it can slightly scale for the Knapsack problem up to 16
threads.


Despite the similar average speedups for YAP$_{TD_2}$ and YAP$_{BU}$,
their execution times are quite different. Consider, for example, the
$D_{50}$ dataset of the Knapsack problem with 32 threads. While the
speedup $20.62$ of YAP$_{TD_2}$ corresponds to an execution time of
$1,233$ milliseconds, the speedup $21.76$ of YAP$_{BU}$ only
corresponds to $804$ milliseconds. Similarly, for the LCS problem, if
considering the $D_{50}$ dataset with 32 threads, while the speedup
$19.58$ of YAP$_{TD_2}$ corresponds to $2,255$ milliseconds, the
speedup $20.52$ of the YAP$_{BU}$ only corresponds to $1,406$
milliseconds.

The results also suggest that the execution times are not affected by
the values for the weights/profits generated. In general, the speedups
obtained for the different datasets ($D_{10}$, $D_{30}$ and $D_{50}$)
are always very close for the same number of threads.  Note that for
the bottom-up approaches this was expected since the complete matrix
of results has to be computed. For the top-down approaches, it can be
affected by the values for the weights/profits due to the depth in the
evaluation tree where solutions can be found. However, since we are
using randomized values in the datasets, we are aiming for the average
case.

Regarding the comparison with XSB's shared tables model, Yap's results
clearly outperform those of XSB. For the execution time with one
thread, XSB shows higher times than all Yap's approaches. For the
concurrent execution of the Knapsack problem, XSB shows no speedups
and for the concurrent execution of the LCS problem we have no results
available ($n.a.$) since we got \emph{segmentation fault} execution
errors. From our point of view, XSB's results are a consequence of the
\emph{usurpation operation}~\cite{Marques-08} that restricts the
potential of concurrency to non-mutually dependent
sub-computations. As the concurrent versions of the Knapsack and LCS
problems create mutual dependent sub-computations, which can be
executed in different threads, the XSB is actually unable to execute
them concurrently. In other works, even if we launch an arbitrary
large number of threads on those programs, the system would tend to
use only one thread at the end to evaluate most of the computations.


\section{Future Perspectives and Challenging Research Directions}

Currently, Yap provides the ground technology for both implicit and
explicit concurrent tabled evaluation, but separately. From the user's
point of view, tabling can be enabled through the use of single
directives of the form `\emph{:-~table p/n}', meaning that common
sub-computations for \emph{p/n} will be synchronized and shared
between workers at the engine level, i.e., at the level of the tables
where the results for such sub-computations are stored. Implicit
concurrent tabled evaluation can be triggered if using the OPTYap
design~\cite{Rocha-05a}, which exploits implicit or-parallelism using
shared memory processes. Explicit concurrent tabled evaluation can be
triggered if using the thread-based implementation~\cite{Areias-12a},
but the user still needs to explicitly implement the thread management
and scheduler policy for task distribution, which is orthogonal to the
focus of this work. Table~\ref{tab_concurrent_features} highlights the
key differences between the two concurrent tabling strategies in Yap's
current implementation.

\begin{table}[!ht]
\centering
\caption{Concurrent tabling supported features}
\begin{tabular}{c|cccc}
\multirow{2}{*}{\textbf{\emph{Strategy}}}
  & \textbf{\emph{Execution}} & \textbf{\emph{Memory}}    & \textbf{\emph{Synchronization}} & \textbf{\emph{Mode-Directed}} \\
  & \textbf{\emph{Model}}     & \textbf{\emph{Allocator}} & \textbf{\emph{Mechanisms}}      & \textbf{\emph{Tabling}} \\
\hline\hline
\textbf{\emph{Implicit}} & \emph{Processes/Threads} & \emph{Fixed-Size} & \emph{Lock-Based} & \emph{--} \\
\textbf{\emph{Explicit}} & \emph{Threads}           & \emph{Fixed-Size} & \emph{Lock-Free}  & \emph{NS/SS/PAS Designs} \\
\end{tabular}
\label{tab_concurrent_features}
\end{table}

The present work could thus be viewed as the basis to further
directions and further research in this area. So far, we have achieved
our initial goal. Even so, the system still has some restrictions that
may reduce its use elsewhere and its contribution to general Prolog
applications. We next discuss future perspectives and challenging
research directions:

\begin{description}
\item[Extend CS design to support lock-free data structures.] Due to
  the good performance results obtained with the lock-free proposals,
  an obvious research direction for further work is to extend the
  original CS design to use lock-free data structures instead of the
  lock-based data structures.
  
\item[Extend CS/FS/PAC designs to support mode-directed tabling.]  In
  the previous section, we observed the advantages of combining
  mode-directed tabling with the PAS design. However, in the PAS
  design, the answers to common tabled subgoal calls are only shared
  when the corresponding tables are completed. Since the CS/FS/PAC
  designs do not require the completion of tables to share answers,
  threads would be able to share and propagate answers sooner. The
  problem of combining mode-directed tabling with the CS/FS/PAC
  designs is on how to efficiently support concurrent delete
  operations on the trie structures and on how to efficiently handle
  the interface between consumer calls and the navigation in the trie
  of answers for the several running workers.
  
\item[Support concurrent delete operations on the trie structures.]
  As mention above, this is a key feature to allow for an efficient
  implementation of concurrent mode-directed tabling with the
  CS/FS/PAC designs. Moreover, this extension could also be applied to
  concurrent incremental tabling~\cite{Diptikalyan-PhD}, where
  specific subgoal calls and answers can be dynamically deleted during
  tabled evaluation.

\item[Concurrent linear tabling.] Since the evaluation of programs
  with a linear tabling engine is less complex than the evaluation
  with a suspension-based engine, it would be interesting to study how
  different linear tabled strategies~\cite{Areias-11,Areias-13} could
  run concurrently and take advantage of the different table space
  designs presented in this work.

\item[Implicit and explicit concurrent evaluation in a single
  framework.] This is our most challenging goal towards an efficient
  concurrent framework which integrates both implicit and explicit
  concurrent tabled evaluation in a single tabling engine. This is a
  very complex task since we need to combine the explicit control
  required to launch, assign and schedule tasks to workers, with the
  built-in mechanisms for handling tabling and/or implicit
  concurrency, which cannot be controlled by the user. In such a
  framework, a program begins as a single worker that executes
  sequentially until reaching an implicit or explicit concurrent
  construct. When reaching an explicit concurrent construct, the
  execution model launches a set of additional workers to exploit
  concurrently a set of independent sub-computations (which may
  include tabled and non-tabled predicates). From the workers point of
  view, each concurrent sub-computation computes its tables but, at
  the implementation level, the tables can be shared following the
  table space designs presented before for implicit concurrent tabled
  evaluation. Otherwise, if reaching an explicit concurrent construct,
  the execution model launches a set of additional workers to exploit
  in parallel a given sub-computation. Parallel execution is then
  handled implicitly by the execution model taking into account
  possible directive restrictions. For example, we may have directives
  to define the number of workers, the scheduling strategy to be used,
  load balancing policies, etc. By taking advantage of these explicit
  parallel constructs, a user can write parallel logic programs from
  scratch or parallelise existing sequential programs by incrementally
  pinpointing the sub-computations that can benefit from parallelism,
  using the available directives to test and fine tune the program in
  order to achieve the best performance. Such a framework could renew
  the glamour of Prolog systems, especially in the concurrent/parallel
  programming community. Combining the inherent implicit parallelism
  of Prolog with explicit high-level parallel constructs will clearly
  enhance the expressiveness and declarative style of tabling and
  simplify concurrent programming.
\end{description}





\bibliographystyle{acmtrans}
\bibliography{references}


\newpage
\appendix
\setcounter{theorem}{0}
\setcounter{corollary}{0}


\section{Proofs}
\label{appendix_proofs}

\begin{theorem}
If $NT \geq 1$ and $NC(P_i) \geq 1$ then $MU_{SS}(P_i) \leq
MU_{NS}(P_i) $ if and only if the formula $ [NC(P_i) - 1] * BA \leq
[NT - 1] * ST(P_i)$ holds.
\end{theorem}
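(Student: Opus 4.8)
The plan is to prove the equivalence by direct algebraic manipulation: I would put both memory-usage formulas in a common expanded form, compute the difference $MU_{NS}(P_i) - MU_{SS}(P_i)$ in closed form, and read off the sign condition. Since both quantities are concrete sums of nonnegative memory sizes, every manipulation is a reversible identity, so the biconditional will come for free once the difference is simplified.

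First I would expand $MU_{NS}(P_i)$ by substituting $TE_{NS} = TE + BA$ from Eq.~\ref{equation_mu_ns} and distributing the factor $NT$ over the bracket, obtaining
\begin{equation*}
MU_{NS}(P_i) = TE + BA + NT \cdot ST(P_i) + NT \cdot \sum_{j=1}^{NC(P_i)} [SF + AT(P_{i.j})].
\end{equation*}
Next I would rewrite $MU_{SS}(P_i)$ from Eq.~\ref{equation_mu_ss} by splitting the summation into its two parts, so that the fixed-size $BA$ contributions collect into $NC(P_i) \cdot BA$ while the per-thread terms collect into the same $NT$-scaled sum:
\begin{equation*}
MU_{SS}(P_i) = TE + ST(P_i) + NC(P_i) \cdot BA + NT \cdot \sum_{j=1}^{NC(P_i)} [SF + AT(P_{i.j})].
\end{equation*}

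The key observation is that the terms $TE$ and $NT \cdot \sum_{j} [SF + AT(P_{i.j})]$ occur identically in both expressions and therefore cancel upon subtraction. Computing the difference then yields
\begin{equation*}
MU_{NS}(P_i) - MU_{SS}(P_i) = [NT - 1] \cdot ST(P_i) - [NC(P_i) - 1] \cdot BA.
\end{equation*}
From here the conclusion is immediate: $MU_{SS}(P_i) \leq MU_{NS}(P_i)$ holds precisely when the difference above is nonnegative, i.e.\ precisely when $[NC(P_i) - 1] \cdot BA \leq [NT - 1] \cdot ST(P_i)$, which is the claimed formula.

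I do not expect a genuine obstacle, since the statement is pure bookkeeping; the result is an algebraic identity rather than a true inequality argument. The only point requiring care is the regrouping of the summation in Eq.~\ref{equation_mu_ss}, where one must keep the fixed-size $BA$ contribution (accumulating to $NC(P_i) \cdot BA$) cleanly separate from the per-thread $SF + AT(P_{i.j})$ contribution, so that the latter matches its counterpart in $MU_{NS}(P_i)$ and cancels exactly. The hypotheses $NT \geq 1$ and $NC(P_i) \geq 1$ are not needed for the equivalence itself, but they ensure that the configuration is well defined and that both sides of the threshold formula are nonnegative.
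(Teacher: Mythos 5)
Your proposal is correct and follows essentially the same route as the paper's own proof: expand both formulas, cancel the common $TE$ and $NT\cdot\sum_j[SF+AT(P_{i.j})]$ terms, and read the biconditional off the sign of the difference (the paper computes $MU_{SS}-MU_{NS}$ rather than $MU_{NS}-MU_{SS}$, which is an immaterial sign convention). No gaps.
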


\begin{proof}
Assuming that all tabled subgoal calls are completely evaluated, for
the NS design we have:

\begin{align*}
MU_{NS}(P_i) &= TE + BA + NT * ST(P_i) + NT * \sum\limits^{NC(P_i)}_{j=1}[SF + AT(P_{i.j})]
\end{align*}

And, for the SS design we have:

\begin{align*}
MU_{SS}(P_i) &= TE + ST(P_i) + \sum\limits^{NC(P_i)}_{j=1} [BA + NT * [SF + AT(P_{i.j})]]\\
            &= TE + ST(P_i) + NC(P_i) * BA + NT * \sum\limits^{NC(P_i)}_{j=1} [SF + AT(P_{i.j})]
\end{align*}

The value of $MU_{SS}(P_i) - MU_{NS}(P_i)$ is then given by:

\begin{align*}
MU_{SS}(P_i) - MU_{NS}(P_i) &= ST(P_i) + NC(P_i)*BA - BA - NT* ST(P_i) \\
                          &= [NC(P_i)-1] * BA - [NT - 1] * ST(P_i)
\end{align*}
        
Now, for the final part of the proof:

\begin{align*}
MU_{SS}(P_i) \leq MU_{NS}(P_i) & \Leftrightarrow MU_{SS}(P_i) - MU_{NS}(P_i) \leq 0 \\
& \Leftrightarrow [NC(P_i)-1] * BA - [NT - 1] * ST(P_i) \leq 0 \\
& \Leftrightarrow [NC(P_i) - 1] * BA \leq [NT-1] * ST(P_i)
\end{align*}
\end{proof}


\begin{theorem}
If $NT > 1$ and $NC(P_i) \geq 1$ then $MU_{FS}(P_i) < MU_{SS}(P_i)$.
\end{theorem}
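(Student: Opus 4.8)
The plan is to compute the difference $MU_{SS}(P_i) - MU_{FS}(P_i)$ directly from Eq.~\ref{equation_mu_ss} and Eq.~\ref{equation_mu_fs} and to show that it is strictly positive whenever $NT > 1$. First I would subtract the two expressions term by term. The $TE$ and $ST(P_i)$ contributions are identical and cancel immediately, and so does the single $BA$ that appears once per subgoal call in both designs. This leaves, inside the summation over the $NC(P_i)$ subgoal calls, the comparison between $NT*[SF + AT(P_{i.j})]$ for the SS design and $SE_{FS} + NT*[SF_{FS} + BP] + AT(P_{i.j})$ for the FS design.

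The key algebraic step is to use the identity $SF = SE_{FS} + SF_{FS}$ supplied with Eq.~\ref{equation_mu_fs} to rewrite $NT*SF$ as $NT*SE_{FS} + NT*SF_{FS}$. The term $NT*SF_{FS}$ then cancels against the corresponding FS term, while the answer-trie contributions collapse from $NT*AT(P_{i.j})$ against $AT(P_{i.j})$. After collecting, I expect to obtain
\begin{align*}
MU_{SS}(P_i) - MU_{FS}(P_i) = \sum_{j=1}^{NC(P_i)} \left[ (NT-1)*(SE_{FS} + AT(P_{i.j})) - NT*BP \right].
\end{align*}

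It then remains to argue that each summand is strictly positive. Since $NT$ is an integer with $NT > 1$, we have $NT \geq 2$ and therefore $NT \leq 2*(NT-1)$, so $NT*BP \leq 2*(NT-1)*BP$ and every summand is bounded below by $(NT-1)*(SE_{FS} + AT(P_{i.j}) - 2*BP)$. Alternatively, using $SE_{FS} \geq BP$ one sees that each summand is at least $(NT-1)*AT(P_{i.j}) - BP$, which is positive because $NT-1 \geq 1$ and each answer trie exceeds a single back pointer. Either route reduces the claim to the same elementary observation about the relative sizes of the structures involved.

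The main obstacle is precisely this last observation, and it is the conceptual heart of the result rather than a routine manipulation. Unlike the SS-versus-NS comparison of Thm.~\ref{theorem_NS_vs_SS}, the FS design does not only save memory: by splitting the subgoal frame it introduces a genuinely new per-thread field, the back pointer $BP$, that SS never pays. The proof must therefore establish that the savings from sharing a single answer trie and a single subgoal entry across all threads strictly dominate this new $NT*BP$ overhead for every $NT > 1$. This is exactly where the structure-size relationship must be invoked: the subgoal entry $SE_{FS}$ is a multi-field record holding at least the pointer to the shared answer trie, and each $AT(P_{i.j})$ is itself a trie, so both comfortably exceed a lone back pointer. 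Finally, I would note that this is the mirror image of the $NT = 1$ corollary: there the bracket collapses to $-BP$, making FS strictly worse by exactly $NC(P_i)*BP$, and the sign flips precisely because the $(NT-1)$ factor multiplying the shared savings becomes positive once a second thread is present.
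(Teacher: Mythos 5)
Your proposal is correct and follows essentially the same route as the paper's proof: both subtract the two memory-usage formulas, use the identity $SF = SE_{FS} + SF_{FS}$ to cancel the per-thread subgoal-frame terms, and reduce the claim to a per-call inequality settled by the relative sizes of $SE_{FS}$, $BP$ and $AT(P_{i.j})$. Your final expression $(NT-1)\cdot(SE_{FS} + AT(P_{i.j})) - NT\cdot BP$ is exactly the negation of the paper's per-call difference, and you are in fact somewhat more explicit than the paper (which merely asserts the signs via underbraces) about the structure-size assumptions, such as $SE_{FS} \geq BP$ and $AT(P_{i.j}) > BP$, on which the conclusion ultimately rests.
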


\begin{proof}
Assuming that all tabled subgoal calls are completely evaluated, for
the FS design we have:

\begin{align*}
MU_{FS}(P_i) &= TE + ST(P_i) + \sum\limits^{NC(P_i)}_{j=1} [SE_{FS} + BA + NT * [SF_{FS} + BP] + AT(P_{i.j})] \\
            &= TE + ST(P_i) + \sum\limits^{NC(P_i)}_{j=1} [SF - SF_{FS} + BA + NT * [SF_{FS} + BP] + AT(P_{i.j})] \\
            &= TE + ST(P_i) + \sum\limits^{NC(P_i)}_{j=1} [SF + [NT-1]*SF_{FS} + BA + NT * BP + AT(P_{i.j})] \\
            &= TE + ST(P_i) + NC(P_i) * BA + \sum\limits^{NC(P_i)}_{j=1} [SF + [NT-1]*SF_{FS} + NT* BP + AT(P_{i.j})]
\end{align*}

And, for the SS design we have:

\begin{align*}
MU_{SS}(P_i) = TE + ST(P_i) + NC(P_i) * BA + NT * \sum\limits^{NC(P_i)}_{j=1} [SF + AT(P_{i.j})]
\end{align*}

The value of $MU_{FS}(P_i) - MU_{SS}(P_i)$ is then given by:

\begin{align*}
MU&_{FS}(P_i) - MU_{SS}(P_i) = \\
& = \sum\limits^{NC(P_i)}_{j=1} [SF + [NT-1]*SF_{FS} + NT * BP + AT(P_{i.j})] - NT * \sum\limits^{NC(P_i)}_{j=1} [SF + AT(P_{i.j})] \\
& = \sum\limits^{NC(P_i)}_{j=1} [NT * BP] + \sum\limits^{NC(P_i)}_{j=1} [SF + [NT-1]*SF_{FS} + AT(P_{i.j}) - NT * SF - NT * AT(P_{i.j})] \\
& = \sum\limits^{NC(P_i)}_{j=1} [NT * BP] + \sum\limits^{NC(P_i)}_{j=1} [[NT - 1] * [SF_{FS} - SF - AT(P_{i.j})] \\
& = \sum\limits^{NC(P_i)}_{j=1} [NT * BP] + \sum\limits^{NC(P_i)}_{j=1} [[NT-1]*[SF_{FS} - SF]] - \sum\limits^{NC(P_i)}_{j=1} [[NT-1]*AT(P_{i.j})]
\end{align*}

Now, for the final part of the proof:

\begin{align*}
MU&_{FS}(P_i) < MU_{SS}(P_i) \Leftrightarrow MU_{FS}(P_i) - MU_{SS}(P_i) < 0 \\
& \Leftrightarrow \sum\limits^{NC(P_i)}_{j=1} [NT * BP] + \sum\limits^{NC(P_i)}_{j=1} [[NT-1]*[SF_{FS} - SF]] - \sum\limits^{NC(P_i)}_{j=1} [[NT-1]*AT(P_{i.j})] < 0 \\
& \Leftrightarrow \sum\limits^{NC(P_i)}_{j=1} [NT * BP] + \sum\limits^{NC(P_i)}_{j=1} [[NT-1]*[SF_{FS} - SF]] < \sum\limits^{NC(P_i)}_{j=1} [[NT-1]*AT(P_{i.j})] \\
& \Leftrightarrow NT * BP + [NT-1]*[SF_{FS} - SF] < [NT-1]*AT(P_{i.j}) \\
& \Leftrightarrow \underbrace{[NT-1]*\underbrace{[SF_{FS} + BP - SF]}_{<0} + BP}_{<0} < [NT-1]* \underbrace{AT(P_{i.j})}_{>0}
\end{align*}
\end{proof}


\section{Benchmark Details}
\label{appendix_bechmark_details}

Table~\ref{tab_benchs} shows the characteristics of the five sets of
benchmark programs. The \emph{Large Joins} and \emph{WordNet} sets
were obtained from the OpenRuleBench project~\cite{Liang-09}; the
\emph{Model Checking} set includes three different specifications and
transition relation graphs usually used in model checking
applications; the \emph{Path Left} and \emph{Path Right} sets
implement two recursive definitions of the well-known $path/2$
predicate, that computes the transitive closure in a graph, using
several configurations of $edge/2$
facts. Figure~\ref{fig_edge_configurations} shows an example for each
configuration. We experimented the \emph{BTree} configuration with
depth 17, the \emph{Pyramid} and \emph{Cycle} configurations with
depth 2000 and the \emph{Grid} configuration with depth 35. All
benchmarks find all the solutions for the problem.

\begin{figure}[ht]
\centering
\includegraphics[width=11cm]{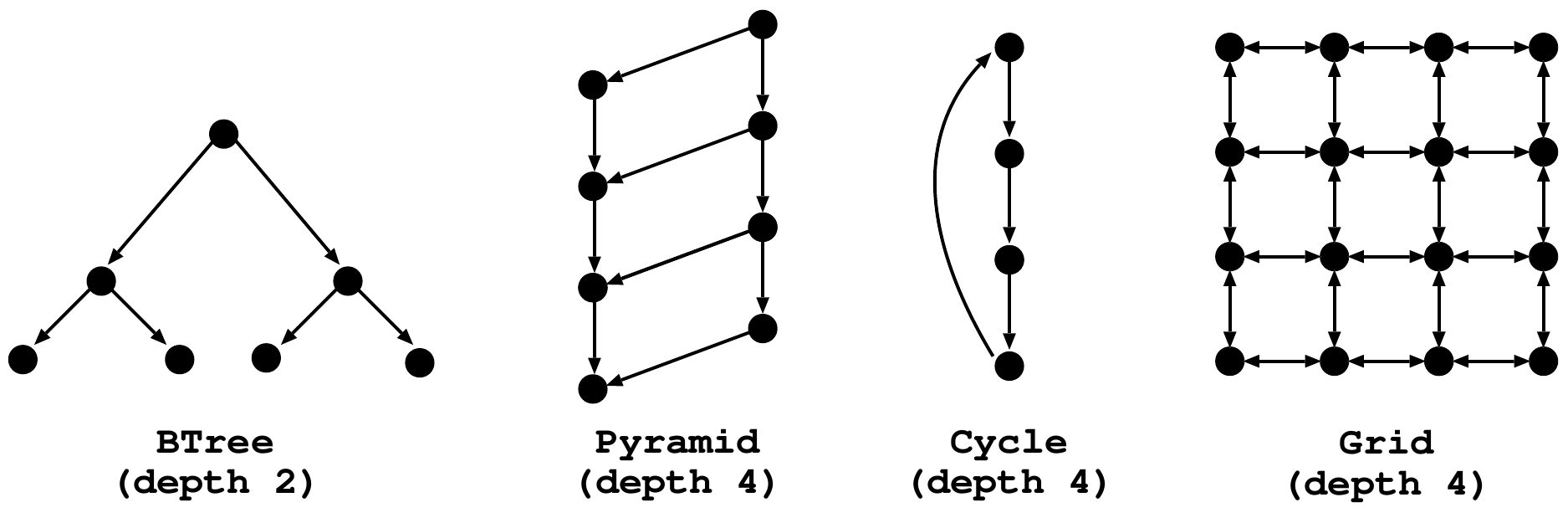}
\caption{Edge configurations for the path benchmarks}
\label{fig_edge_configurations}
\end{figure}

The columns in Table~\ref{tab_benchs} have the following meaning:

\begin{itemize}
\item {\bf calls:} is the number of different calls to tabled
  subgoals. It corresponds to the number of paths in the subgoal
  tries.
\item {\bf trie nodes:} is the total number of trie nodes allocated in
  the corresponding subgoal/answer trie structures.
\item {\bf trie depth:} is the minimum/average/maximum number of trie
  node levels required to represent a path in the corresponding
  subgoal/answer trie structures. Trie structures with smaller average
  values are more amenable to contention, i.e., to have a higher
  number of synchronization points.
\item {\bf unique:} is the number of different tabled answers
  found. It corresponds to the number of paths in the answer tries.
\item {\bf repeated:} is the number of redundant tabled answers
  found. 
\item {\bf NS:} is the average execution time, in seconds, of ten runs
  for 1 thread with the NS design.
\end{itemize}

\begin{sidewaystable}
\centering
  \caption{Characteristics of the benchmark programs}
\begin{tabular}{lrrccrrrcc}
\multirow{2}{*}{\bf Bench}
& \multicolumn{3}{c}{\bf Tabled Subgoals}
&
& \multicolumn{4}{c}{\bf Tabled Answers}
& \multicolumn{1}{c}{\bf Time (sec)} \\ \cline{2-4}\cline{6-9}
& \multicolumn{1}{c}{\bf calls}
& \multicolumn{1}{c}{\bf trie nodes}
& \multicolumn{1}{c}{\bf trie depth}
&
& \multicolumn{1}{c}{\bf unique}
& \multicolumn{1}{c}{\bf repeated}
& \multicolumn{1}{c}{\bf trie nodes}
& \multicolumn{1}{c}{\bf trie depth}
& \multicolumn{1}{c}{\bf NS} \\
\hline\hline
\multicolumn{10}{l}{\bf Large Joins} \\
\bf{Join2}    &       1 &       6 & 5/5/5 && 2,476,099 &         0 &
2,613,660 &    5/5/5 &  2.85 \\
\bf{Mondial}  &      35 &      42 & 3/4/4 &&     2,664 & 2,452,890 &
   14,334 &    6/7/7 &  0.84 \\
\hline
\multicolumn{10}{l}{\bf WordNet} \\
\bf{Clusters} & 117,659 & 235,319 & 2/2/2 &&   166,877 &   161,853 &
284,536 &    1/1/1 &   0.83 \\
\bf{Holo}     & 117,657 & 235,315 & 2/2/2 &&    74,838 &        54 &
192,495 &    1/1/1 &   0.75 \\
\bf{Hyper}    & 117,657 & 235,315 & 2/2/2 &&   698,472 &     8,658 &
816,129 &    1/1/1 &   1.42 \\
\bf{Hypo}     & 117,657 & 117,659 & 2/2/2 &&   698,472 &    20,341 &
816,129 &    1/1/1 &   1.53 \\
\bf{Mero}     & 117,657 & 117,659 & 2/2/2 &&    74,838 &        13 &
192,495 &    1/1/1 &   0.74 \\
\bf{Tropo}    & 117,657 & 235,315 & 2/2/2 &&       472 &         0 &
118,129 &    1/1/1 &   0.66 \\
\hline
\multicolumn{10}{l}{\bf Model Checking} \\
\bf{IProto}   &       1 &       6 & 5/5/5 &&   134,361 &   385,423 &
1,554,896 &  4/51/67 & 2.70 \\
\bf{Leader}   &       1 &       5 & 4/4/4 &&     1,728 &   574,786 &
   41,788 & 15/80/97 & 3.51  \\
\bf{Sieve}    &       1 &       7 & 6/6/6 &&       380 & 1,386,181 &
    8,624 & 21/53/58 & 18.50 \\
\hline
\multicolumn{10}{l}{\bf Path Left} \\
\bf{BTree}    &       1 &       3 & 2/2/2 && 1,966,082 &         0 &
2,031,618 &    2/2/2 &   1.53 \\
\bf{Cycle}    &       1 &       3 & 2/2/2 && 4,000,000 &     2,000 &
4,002,001 &    2/2/2 &   3.52 \\
\bf{Grid}     &       1 &       3 & 2/2/2 && 1,500,625 & 4,335,135 &
1,501,851 &    2/2/2 &  1.93 \\
\bf{Pyramid}  &       1 &       3 & 2/2/2 && 3,374,250 & 1,124,250 &
3,377,250 &    2/2/2 &   3.08 \\
\hline
\multicolumn{10}{l}{\bf Path Right} \\
\bf{BTree}    & 131,071 & 262,143 & 2/2/2 && 3,801,094 &         0 &
3,997,700 &    1/2/2 &  2.33 \\
\bf{Cycle}    &   2,001 &   4,003 & 2/2/2 && 8,000,000 &     4,000 &
8,004,001 &    1/2/2 &  3.55 \\
\bf{Grid}     &   1,226 &   2,453 & 2/2/2 && 3,001,250 & 8,670,270 &
3,003,701 &    1/2/2 &  2.32 \\
\bf{Pyramid}  &   3,000 &   6,001 & 2/2/2 && 6,745,501 & 2,247,001 &
6,751,500 &    1/2/2 &  3.17 \\
\end{tabular}
\label{tab_benchs}
\end{sidewaystable}

The \emph{Mondial} benchmark, from the \emph{Large Joins} set, and the
three \emph{Model Checking} benchmarks seem to be the benchmarks least
amenable to contention since they are the ones that find less unique
answers and that have the deepest trie structures. In this regard, the
\emph{Path Left} and \emph{Path Right} sets correspond to the opposite
case. They find a huge number of answers and have very shallow trie
structures. On the other hand, the \emph{WordNet} and \emph{Path
  Right} sets have the benchmarks with the largest number of different
subgoal calls, which can reduce the probability of contention because
answers can be found for different subgoal calls and therefore be
inserted with minimum overlap. On the opposite side are the
\emph{Join2} benchmark, from the \emph{Large Joins} set, and the
\emph{Path Left} benchmarks, which have only a single tabled subgoal
call.


\end{document}